\DeclareMathOperator*{\argmax}{arg\,max}
\DeclareMathOperator*{\argmin}{arg\,min}
\DeclarePairedDelimiter\abs{\lvert}{\rvert}
\DeclareMathAlphabet{\pazocal}{OMS}{zplm}{m}{n}
\DeclarePairedDelimiter{\Exp}{\mathbb{E}[}{]}
\DeclarePairedDelimiter{\Prob}{\mathbb{P}[}{]}
\newcommand{\real}{\mathbb{R}}
\newtheorem{definition}{Definition}
\newtheorem*{definition*}{Definition}
\newtheorem{theorem}{Theorem}
\newtheorem{corollary}[theorem]{Corollary}
\title{Online learning for min-max discrete problems}
\date{}
\author{Evripidis Bampis$^1$,  Dimitris Christou$^{1,2}$, Bruno Escoffier$^1$, Nguyễn Kim Thắng$^{1,3}$\\ $ $\\
$^1$ Sorbonne Universit\'e, CNRS, LIP6, F-75005 Paris, France \\
$ $\\ $^2$ National Technical University of Athens, Greece\\ $ $ \\ $^3$ IBISC, Univ Evry, University Paris-Saclay, France}
\begin{document}

\maketitle

\begin{abstract}
We study various discrete nonlinear combinatorial optimization problems in an online learning framework.
In the first part, we address the computational complexity of designing 
vanishing regret (and vanishing approximate regret) algorithms. We provide a general reduction showing that many (min-max) polynomial time solvable problems not only do not have a vanishing regret, but also no vanishing approximation $\alpha$-regret, for some $\alpha$, unless $NP=RP$. Then, we focus on a particular min-max problem, the min-max version of the vertex cover problem which is solvable in polynomial time in the offline case. Our reduction implies that there is no $(2-\epsilon)$-regret online randomized algorithm unless Unique Game is in $RP$. Besides, we prove a matching upper bound providing an online algorithm based on the online gradient descent method. 

In the second part, we turn our attention to online learning algorithms that are based on an offline optimization oracle that, given a set of instances of the problem, is able to compute the optimum static solution.  We start by presenting an online algorithm with vanishing regret that is based on the Follow the Perturbed Leader algorithm for an interesting generalization of knapsack (Barman et al. ICALP 2012). Subsequently, we show that for several min-max (nonlinear) discrete optimization problems,  it is strongly $NP$-hard to solve the offline optimization oracle, even for problems that can be solved in polynomial time in the static case (e.g. min-max vertex cover, min-max perfect matching, etc.). This also provides a useful insight into the connection between the non-linear nature of some problems and the increase of their computational hardness when moved to an online learning setting.
\end{abstract}
\newpage

\section{Introduction}\label{sec:Introduction}

\par Over the past years, online learning has become a very active research field. This is due to the widespread of applications with evolving or adversarial environments, e.g. routing schemes in networks \cite{Awerbuch}, online marketplaces \cite{Cesa-Bianchi}, spam filtering \cite{Hazanothers16:Introduction-to-online}, etc. An online learning algorithm has to iteratively choose an action over a (possible infinite) set of feasible decisions. A loss/reward is associated to each decision which may be adversarially chosen. The losses/rewards are unknown to the algorithm beforehand, making it impossible, in general, to be competitive with the best dynamic strategy that may change decisions each period. On that note, the goal is to minimize the {\em regret}, i.e. the difference between the total loss/reward of the online algorithm and that of the best offline decision, if one had to choose a single decision in hindsight

\par A ``good'' online learning algorithm is an algorithm whose regret is sublinear as a function of the length of the time-horizon since then, on the average, the algorithm performs as well as the best single action in hindsight on the long term. Such an online algorithm is called an online learning algorithm with {\em vanishing regret}. For problems for which the offline version is $NP$-hard, the notions of regret and vanishing regret have been extended to the notions of $\alpha$-\textit{regret} and  $\alpha$-\textit{vanishing regret} in order to take into account the existence of an $\alpha$-approximation algorithm instead of an exact algorithm for solving the offline optimization problem.

\par  While a lot of online learning problems can be modeled as the so called ``experts problem'' by associating a feasible solution to an expert, there is clearly an efficiency challenge since there are potentially an exponential number of solutions making problematic the use of such an approach in practice. Hazan and Koren~\cite{HazanKoren16:The-computational-power} proved that a \textit{vanishing regret} algorithm  with running-time polynomial in the size of the problem does not exist in general settings without any assumption on the structure of the problem. However, there are many results that prove that the structure of a problem can be indeed exploited in order to get such an algorithm. Notably, Kalai and Vempala \cite{KalaiVempala} proved that for linear problems that are 
polynomially solvable, such an algorithm can always be constructed by using any algorithm for the offline problem as an oracle. Other methods have been used such as the online gradient descent \cite{Zinkevich} for convex objectives, a generalization of the Follow the Perturbed Leader algorithm  for submodular objective functions \cite{HazanKale12:Online-submodular} and the Generalized Follow the Perturbed Leader \cite{DudikHaghtalab17:Oracle-efficient-online} algorithm for general objectives.

\par Our work takes into account the computational efficiency of online learning algorithms in the same vein as the works in \cite{AgarwalGonen19:Learning-in-Non-convex,KalaiVempala,HazanKale12:Online-submodular,SyrgkanisKrishnamurthy16:Efficient-algorithms,DaskalakisSyrgkanis16:Learning-in-auctions:,DudikHaghtalab17:Oracle-efficient-online,KakadeKalai09:Playing-games,Garber17:Efficient-online}.  We study various discrete {\it nonlinear} combinatorial optimization problems in an online learning framework. We focus in particular on min-max (discrete) optimization problems, for two main reasons. First, together with the \emph{utilitarian} objective, the \emph{egalitarian} objective --- represented by the min-max function --- is arguably one of the most studied families of objective functions in Optimization, Operations Research, Game Theory, etc (see for instance \cite{GabowT88,berman,Camerini78}). Second and most importantly with the current article, as we will see the non-linearity of min-max problems lead to several hardness results in the online learning framework, even for min-max problems which are trivial to solve
efficiently in the offline setting, thus exhibiting an interesting difference with linear discrete problems.

\par While most works focus on analyzing online learning algorithms  that can achieve vanishing regret or $\alpha$-vanishing regret for certain families of problems, not many results that show the computational hardness of achieving those properties are known. 
Note that, a hardness result for offline optimization do not necessarily translate to an analogous
hardness result for online learning because an online learning algorithm could conceivably achieve a small cost with different solutions at different times, while never figuring out the best fixed solution in hindsight.
In generally, our goal in this work is to address the three following  central questions:
\begin{itemize}
    \item $(Q1)$: Are there  negative results showing that getting vanishing regret (or even vanishing approximate regret) is computationally hard?
    
    \item $(Q2)$: How does the non-linear nature of an online learning problem tie in with its computational hardness?
    
    \item $(Q3)$: Are there some notable differences in the efficiencies of \textit{Follow the Leader} and \textit{gradient descent} strategies for discrete problems?
\end{itemize}


\subsection{The online learning framework}

\par An online learning problem consists of a decision-space $\pazocal{X}$, a state-space $\pazocal{Y}$ and an objective function $f:\pazocal{X}\times\pazocal{Y}\rightarrow\real$ that can be either a cost or a reward function. Any problem of this class can be viewed as an iterative adversarial game with $T$ rounds where the following procedure is repeated for $t=1,\dots,T$:\textit{ (a)} the algorithm first chooses an action $x^t\in\pazocal{X}$; \textit{(b)} after the algorithm has committed to its choice, the nature reveals a state $y^t\in\pazocal{Y}$; \textit{(c)} the algorithm observe the state $y^t$ and suffers a loss (or gains a reward) $f^t(x^t) = f(x^t,y^t)$. We use $f^t(x)$ as another way to refer to the objective function \textit{f} after observing the state $y^t$, i.e. the objective function at round $t$.

\par The objective of the player is to minimize/maximize the accumulative cost/reward of his decided actions, which is given by the aggregation $\sum_{t=1}^Tf(x^t,y^t)$. An online learning algorithm is any algorithm that decides the actions $x^t$ at every round before observing $y^t$. We compare the decisions $(x^1,\dots,x^T)$ of the algorithm with those of the best \textit{static} action in hindsight, defined as:
$x^* =  \argmin_{x\in\pazocal{X}}\sum_{t=1}^Tf(x,y^t)$, or $x^* = \argmax_{x\in\pazocal{X}}\sum_{t=1}^Tf(x,y^t)$,
for minimization or maximization problems respectively. This is the action that a (hypothetical) offline oracle would compute, if it had access to the entire sequence $y^1,\dots, y^T$. The typical measurement for the efficiency of an online learning algorithm is the \textit{regret}, defined as: 

$$
R_T = \sum_{t=1}^Tf(x^t,y^t) - \sum_{t=1}^Tf(x^*,y^t)
$$

\par As generally no deterministic learning algorithm has a good worst case behaviour, a learning algorithm typically uses some kind of randomness, and the regret denotes  the expectation of the above quantity; in all the article, by referring to a learning algorithm we mean a randomized learning algorithm (including for the impossibility results).
We are interested in online learning algorithms that have the \textit{"vanishing regret"} property. This means that as the "game" progresses ($T\rightarrow\infty$), the average deviation between the algorithm's average cost/payoff to the average cost/payoff of the optimum action in hindsight tends to zero. Typically, a vanishing regret algorithm is an algorithm with regret $R_T$ such that: $\lim_{T\rightarrow\infty}\frac{R_T}{T} = 0$.
However, as we are interested in polynomial time algorithms, we consider only vanishing regret $R_T = O(T^c)$ where $0\leq c <1$ (that guarantees the convergence in polynomial time). Throughout the paper, whenever we mention vanishing regret, we mean regret $R_T = O(T^c)$ where $0\leq c <1$.

\par For many online learning problems, even their offline versions are $NP$-hard. Thus, it is not feasible to produce a vanishing regret sequence with an efficient algorithm. For such cases, the notion of $\alpha$-{\em regret} has been defined as: 

$$ R_T^{\alpha} =  \sum_{t=1}^Tf(x^t,y^t) - \alpha \sum_{t=1}^Tf(x^*,y^t). $$
Hence, we are interested in vanishing $\alpha$-regret sequences for some $\alpha$ for which we know how to approximate the offline problem. The notion of vanishing $\alpha$-regret is defined in the same way as that of vanishing regret. In this article we focus on computational issues. Efficiency for an online learning algorithm needs to capture both the computation of $x^t$ and the convergence speed. This is formalized in the following definition (where $n$ denotes the size of the instance).

\begin{definition}
A {\it polynomial time vanishing $\alpha$-regret algorithm} is an online learning algorithm for which (1) the computation of $x^t$ is polynomial in $n$ and $t$ (2) the expected $\alpha$-regret is bounded by $p(n)T^{c}$ for some polynomial $p$ and some constant $0\leq c<1$. We note that $n$ is the size of the instance which can be much smaller than the decision space of the problem.
\end{definition}

\noindent Note that in case $\alpha=1$, we simply use the term \textit{polynomial time vanishing regret} algorithm.


\subsection{Our contribution}

\par In Section 2, we provide a general reduction showing that many (min-max) polynomial time solvable problems not only do not have a vanishing regret, but also no vanishing approximation $\alpha$-regret, for some $\alpha$, unless $NP=RP$. The result holds for  online randomized algorithms against an oblivious adversary. This result gives an answer to question $(Q1)$ that was stated above. Then, we focus on a particular min-max problem, the min-max version of the vertex cover problem which is solvable in polynomial time in the offline case. The previous reduction proves that there is no $(2-\epsilon)$-regret online randomized algorithm unless Unique Game is in $RP$. Besides, we prove a matching upper bound providing an online algorithm based on the online gradient descent method. 

\par In Section 3, we turn our attention to online learning algorithms that are based on an offline optimization oracle that, given a set of instances of the problem, is able to compute the optimum static solution. This approach is known in the literature as the \textit{Follow the Leader} method. To the best of our knowledge, up to now algorithms based on the Follow the Leader method for non-linear objective functions require an exact oracle or a constant additive error oracle in order to obtain vanishing regret. By slightly extending the \textit{Generalized Follow the Perturbed Leader} algorithm of Dudik et al. \cite{DudikHaghtalab17:Oracle-efficient-online}, we are able to replace the assumption of a constant additive error oracle with that of a FPTAS oracle. We use this extension to present an online algorithm with vanishing regret that is based on the Follow the Perturbed Leader algorithm for a well-studied generalization of the knapsack problem with applications in different areas like cloud computing or connection management in wireless access points~\cite{Antoniadis, BarmanUCM12}.

Subsequently, in Section 3.1 we show strong NP-hardness results for the multiple instance version of some offline problems, which indicates that follow-the-leader-type strategies {\it cannot} be used for the online learning problem, at least with our current knowledge. More precisely, we show that such hardness results hold even for problems that can be solved in polynomial time in the static case (such as min-max vertex cover or min-max perfect matching). In particular, we also prove that the offline optimization oracle is strongly $NP$-hard for the problem of scheduling a set of jobs on $m$ identical machines, where $m$ is a fixed constant. These results provide some useful insight towards answering question $(Q2)$. 

Furthermore, the hardness results on the oracle for the min-max vertex cover problem, paired with the upper bound that is presented in Section 2.2 using the online gradient descent method, provide some answer to question $(Q3)$.


\subsection{Further related works}
Online Learning, or Online Convex Optimization, is an active research domain. In this section, 
we only summarize works which are directly related to ours. We refer the reader to comprehensive books 
\cite{Shalev-Shwartzothers12:Online-learning,Hazanothers16:Introduction-to-online}
and references therein for a more complete overview. The first vanishing regret algorithm has been given by Hannan~\cite{Hannan57:Approximation-to-Bayes}. 
Subsequently, Littlestone and Warmuth \cite{LittlestoneWarmuth94:The-weighted-majority} and Freund and Schapire~\cite{FreundSchapire97:A-decision-theoretic-generalization} gave improved algorithms with regret $\sqrt{\log(|\mathcal{A}|)} o(T)$ where $|\mathcal{A}|$ is the size of the action space. However, these algorithms have running-time $\Omega({|\mathcal{A}|})$ 
which is exponential in the size of the input for many applications, in particular for combinatorial optimization problems. 
An intriguing question is whether there exists a vanishing regret online algorithm with running-time polynomial in $\log(|\mathcal{A}|)$. 
Hazan and Koren~\cite{HazanKoren16:The-computational-power} proved that no such 
algorithm exists in general settings without any assumption on the structure. 
Designing online polynomial-time algorithms with approximation and vanishing regret
guarantees for combinatorial optimization problems is a major research agenda. 

In their breakthrough paper, Kalai and Vempala~\cite{KalaiVempala} presented 
the first efficient online algorithm, called \emph{Follow the Perturbed Leader} (FTPL), 
for linear objective functions. The strategy consists of adding perturbation to the cumulative gain (payoff) of each action and then selecting the action with the highest perturbed gain. This strategy has been generalized and 
successfully applied to several settings 
\cite{HazanKale12:Online-submodular,SyrgkanisKrishnamurthy16:Efficient-algorithms,DaskalakisSyrgkanis16:Learning-in-auctions:,DudikHaghtalab17:Oracle-efficient-online}. 
Specifically, FTPL and its generalized versions have been used to design efficient online vanishing regret algorithms
with oracles beyond linear settings: to submodular settings \cite{HazanKale12:Online-submodular} and
non-convex settings \cite{AgarwalGonen19:Learning-in-Non-convex}. 
However, all these approaches require best-response oracles, and as we show in the current paper, for several problems such best-response oracles require
exponential time computation.

Another direction is to design online learning algorithms using (offline polynomial-time) 
approximation algorithms as oracles. Kakade et al.\cite{KakadeKalai09:Playing-games} provided an algorithm which is inspired by Zinkevich's algorithm \cite{Zinkevich} (gradient descent): at every step, the algorithm updates the current solution in the direction of the gradient and project back to the feasible set using an approximation algorithm.  
They showed that given an $\alpha$-approximation algorithm for a \emph{linear} optimization problem,
after $T$ prediction rounds (time steps)
the online algorithm achieves an $\alpha$-regret bound of $O(T^{-1/2})$ using $T$ calls to the approximation 
algorithm per round in average. Later on, Garber \cite{Garber17:Efficient-online} 
gave an algorithm with $\alpha$-regret bound 
of $O(T^{-1/3})$ using only $O(\log T)$ calls to the approximation algorithm per round in average. 
These algorithms rely crucially on the linearity of the objective functions and it remains an interesting 
open question to design algorithms for online non-linear optimization problems.


\section{Hardness of online learning for min-max problems}\label{sec:Hardness}

\subsection{General reduction}

As mentioned in the introduction, in this section we give some answers to question $(Q1)$ on ruling out the existence of vanishing regret algorithm for a broad family of online min-max problems, even for ones that are polynomial-time solvable in the offline case. 
In fact, we provide a general reduction (see Theorem~\ref{th2}) showing that many min-max problems do not admit vanishing $\alpha$-regret for some $\alpha > 1$ unless $NP=RP$. 

More precisely, we focus on a class of \emph{cardinality minimization problems} where, given an $n$-elements set $\pazocal{U}$, a set of constraints $\pazocal{C}$ on the subsets of $\pazocal{U}$ (defining feasible solutions) and an integer $k$, the goal is to determine whether there exists a feasible solution of size at most $k$. This is a general class of problems, including for instance graph problems such as Vertex Cover, Dominating Set, Feedback Vertex Set, etc.

Given such a cardinality problem $\pazocal{P}$, let \textit{min-max-}$\pazocal{P}$ be the optimization problem where given non-negative weights for all the elements of $\pazocal{U}$, one has to compute a feasible solution (under the same set of constraints $\pazocal{C}$ as in problem $\pazocal{P}$) such that the maximum weight of all its elements is minimized.
The \textit{online min-max-}$\pazocal{P}$ problem is the online learning variant of min-max-$\pazocal{P}$, where the set of elements $\pazocal{U}$ and the set of constrains $\pazocal{C}$ remain static but the weights on the elements of $\pazocal{U}$ change over time.

Interestingly, the min-max versions of all the problems mentioned above are {\it polynomially solvable}. This is actually true as soon as, for problem $\pazocal{P}$, every superset of a feasible solution is feasible. Then one just has to check for each possible weight $w$ if the set of all elements of weight at most $w$ agrees with the constraints.
For example, one can decide if there exists a vertex cover with the maximum weight $w$ as follows: 
remove all vertices of weight strictly larger than $w$, and check if the remaining vertices form a vertex cover. 

We will show that, in contrast, if $\pazocal{P}$ is $NP$-complete then its online learning min-max version has no vanishing regret algorithm (unless $NP=RP$), and that if $\pazocal{P}$ has an inapproximability gap $r$, then there is no vanishing $\alpha$-regret algorithm for its online learning min-max version, for any $\alpha < r$. Let us first recall the notion of approximation gap, where $x_{opt}$ denotes a feasible solution of minimum size to the cardinality problem $\mathcal{P}$.

\begin{definition}
Given two numbers $0\leq A<B \leq 1$, let \textit{[A,B]-Gap-}$\pazocal{P}$ be the decision problem where given an instance of $\pazocal{P}$ such that $|x_{opt}| \leq An$ or $|x_{opt}| \geq Bn$, we need to decide whether $|x_{opt}| < Bn$.
\end{definition}

Now we can state the main result of the section.

\begin{theorem}\label{th2} 
Let $\pazocal{P}$ be a cardinality minimization problem and $A,B$ be real numbers with $0\leq A<B\leq 1$. Assume that the problem \textit{[A,B]-Gap-}$\pazocal{P}$ is $NP$-complete. 
Then, for every $\alpha < \frac{B}{A}$  there is no randomized polynomial-time vanishing $\alpha$-regret algorithm for online min-max-$\pazocal{P}$ unless $NP=RP$.
\end{theorem}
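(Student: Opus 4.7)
\textbf{Proof plan for Theorem~\ref{th2}.} The plan is to reduce \textit{[A,B]-Gap-}$\pazocal{P}$ to online min-max-$\pazocal{P}$, so that a polynomial time vanishing $\alpha$-regret algorithm $\mathcal{A}$ with $\alpha<B/A$ would yield a randomized polynomial time decider for the $NP$-complete gap problem, placing $NP$ in $RP$. Starting from an instance of \textit{[A,B]-Gap-}$\pazocal{P}$ with universe $\pazocal{U}$ of size $n$ and constraint set $\pazocal{C}$, I will construct the online min-max-$\pazocal{P}$ instance with the same $\pazocal{U}$ and $\pazocal{C}$ and let only the weights change over time. I then fix a horizon $T = \mathrm{poly}(n)$ large enough that the regret guarantee $R_T^{\alpha}\le p(n)T^{c}$ satisfies $R_T^{\alpha}\ll T(B-\alpha A)$ (possible because $c<1$), and I draw an i.i.d.\ sequence of weight vectors $y^1,\dots,y^T$ as follows: at each round $t$, pick an element $e_t\in\pazocal{U}$ uniformly at random and set $w^t_{e_t}=1$ and $w^t_e=0$ for $e\neq e_t$. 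Since the $e_t$'s are drawn independently of $\mathcal{A}$'s internal randomness, this is an oblivious randomized adversary, and the per-round loss collapses to $f(x^t,y^t)=\mathbf{1}[e_t\in x^t]$. I run $\mathcal{A}$ against this sequence and record $S=\sum_{t=1}^{T}f(x^t,y^t)$.

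The heart of the argument is a computation of $\mathbb{E}[S]$ in the two regimes of the gap promise. Because $e_t$ is independent of $x^t$, one has $\mathbb{E}[f(x^t,y^t)\mid x^t] = |x^t|/n$, and any feasible $x^t$ must satisfy $|x^t|\ge |x_{opt}|$. In the NO case, $|x_{opt}|\ge Bn$, which immediately gives $\mathbb{E}[S]\ge TB$. In the YES case, I pick any feasible $\hat x$ with $|\hat x|\le An$; then $\mathbb{E}\!\left[\sum_t f(\hat x,y^t)\right]\le TA$, and since $x^*=\argmin_{x\in\pazocal{X}}\sum_t f(x,y^t)$ is the best static action in hindsight, $\mathbb{E}\!\left[\sum_t f(x^*,y^t)\right]\le TA$. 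The $\alpha$-regret guarantee, which holds against every oblivious sequence and hence, by averaging, in expectation against our random sequence, then yields $\mathbb{E}[S]\le \alpha TA + p(n)T^{c}$, strictly smaller than $TB$ by the choice of $T$ and the hypothesis $\alpha A < B$.

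To convert the expectation gap into a decision procedure, I set the threshold $\tau = T(B+\alpha A)/2$ and output YES iff $S\le \tau$. Applying Markov's inequality to $S$ in the YES case and to the nonnegative bounded quantity $T-S$ in the NO case yields a constant-probability distinguisher, and $O(n)$ independent repetitions followed by majority vote drive the two-sided error below $2^{-n}$. This produces a $BPP$ algorithm for the $NP$-complete problem \textit{[A,B]-Gap-}$\pazocal{P}$; a standard self-reducibility argument then converts it into an $RP$ algorithm, giving $NP=RP$ and contradicting the hypothesis. The step most in need of care will be the transfer of the worst-case oblivious regret bound to an expected bound against a \emph{random} adversary; this is legitimate because the guarantee is a uniform bound over every fixed adversarial sequence, so taking expectation over the random sequence preserves it provided the adversary is drawn independently of $\mathcal{A}$'s coins, which the construction guarantees. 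The remaining work is routine: linearity of expectation, Markov, amplification, and the polynomial choice of $T$.
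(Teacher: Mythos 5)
Your adversary construction, the computation of $\mathbb{E}[S]$ in the two promise regimes, and the transfer of the worst-case $\alpha$-regret bound to a random oblivious adversary are exactly the paper's; the divergence is in how the expectation gap is turned into a decision procedure, and that is where your argument has a genuine gap. The paper tests the \emph{sizes} $|x^t|$: it outputs \texttt{Yes} the moment some $|x^t|<Bn$ appears and \texttt{No} only if this never happens. In the NO regime every feasible solution has size at least $Bn$, so the error is one-sided (zero on NO instances), and a single Markov bound on $\sum_t |X_t|$ in the YES regime gives success probability at least a constant, which is amplifiable for one-sided error and places the gap problem in $RP$ directly. You instead threshold the realized loss $S$ at $\tau=T(B+\alpha A)/2$, which gives \emph{two-sided} error, and your amplification step does not go through as written. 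Markov gives, in the YES case, $\mathbb{P}[S>\tau]\le 2\alpha A/(B+\alpha A)+o(1)$ and, in the NO case, $\mathbb{P}[S\le\tau]\le 2(1-B)/(2-B-\alpha A)$; neither bound is below $1/2$ in general (take $B=1/2$, $\alpha A=1/4$: the bounds are $2/3$ and $4/5$, summing to more than $1$), so a single run is not guaranteed to be correct with probability above $1/2$ in either regime, and the bounds do not even certify that $\mathbb{P}[\text{output \texttt{Yes}}]$ is larger in the YES regime than in the NO regime. Majority vote over independent runs therefore need not converge to the correct answer.

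The gap is repairable within your framework: since $S/T\in[0,1]$ and the two regimes separate its \emph{mean} by $B-\alpha A-o(1)$, average $S/T$ over $O\bigl(n/(B-\alpha A)^2\bigr)$ independent runs and apply Hoeffding rather than single-run Markov plus majority; this yields a correct $BPP$ algorithm, and your final invocation of $NP\subseteq BPP\Rightarrow NP=RP$ (via self-reducibility) is standard. Still, this route is strictly heavier than the paper's: by inspecting $|x^t|$ instead of the realized losses you get one-sided error for free, a single run with constant success probability suffices, and the $BPP$-to-$RP$ conversion is never needed. I would recommend adopting the paper's stopping rule, or at least replacing Markov-plus-majority by the mean-estimation argument.
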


\begin{proof}
We prove this theorem by deriving a polytime algorithm for \textit{[A,B]-Gap-}$\pazocal{P}$ that gives, under the assumption of a vanishing $\alpha$-regret algorithm for online min-max-$\pazocal{P}$ with $\alpha < \frac{B}{A}$, the correct answer with probability of error at most $D$ for some constant $D<1$  if $|x_{opt}|\leq An$, and with no error if $|x_{opt}|\geq Bn$. This would imply that the \textit{[A,B]-Gap-}$\pazocal{P}$ problem is in $RP$ and thus $NP=RP$. 

Let $\mathcal{O}$ be a (randomized) vanishing $\alpha$-regret algorithm for online min-max-$\pazocal{P}$ 
for some $\alpha = \left( \frac{B}{A} - \epsilon \right) = (1 - \epsilon') \frac{B}{A}$ where $\epsilon > 0$ is a constant and $\epsilon' = \frac{A}{B} \epsilon$. 
Let $T$ be a time horizon which will be fixed later. We construct the following (offline) randomized algorithm for 
\textit{[A,B]-Gap-}$\pazocal{P}$ using $\mathcal{O}$ as an oracle (subroutine). 
At every step $1 \leq t \leq T$, the algorithm:
\begin{itemize}\item uses the oracle $\mathcal{O}$ to compute a solution $x^{t}$ (i.e., it draws $x^t$ from the random distribution $X_t$ given by the oracle),\item  fixes the weights randomly as follows: it chooses one element uniformly at random in $\mathcal{U}$ and assign weight 1 to that element, and weight 0 to all other ones (so each element has weight 1 with probability $1/n$).
\end{itemize}
Note that these weights are chosen in an oblivious manner (independently on the solutions computed by the oracle $\pazocal{O}$).  
The formal description is given in Algorithm~\ref{alg:ReductionAlgorithm}.

\RestyleAlgo{boxruled}
\LinesNumbered
\begin{algorithm}[ht]
\caption{Algorithm  for the \textit{[A,B]-Gap-}$\pazocal{P}$ problem \label{alg:ReductionAlgorithm}}
\For{ $t=1,2,\dots, T$}{
  Choose $x^t \in \pazocal{X}$ according to the random distribution $X_t$ given by algorithm $\mathcal{O}$.\\
  \textbf{if} $|x^t| < Bn$ \textbf{then} \Return{\texttt{Yes}, i.e., $|x_{opt}| \leq An$}.\\
  Assign weight 1 to an element of $\pazocal{U}$ chosen uniformly at random and 0 to all other elements of $\pazocal{U}$. \\
  Feed the weight vector and the cost $f^t(x^t) = \max_{u\in x^t}w^t(u)$ back to $\mathcal{O}$. \\
}
\Return{\texttt{No}, i.e., $|x_{opt}| \geq Bn$}.
\end{algorithm}


We now analyze the probability of error of Algorithm~\ref{alg:ReductionAlgorithm}.

If $|x_{opt}|\geq Bn$, then obviously all the solutions $x^t$ computed by $\mathcal{O}$ have sizes at least $Bn$, so the algorithm return \texttt{No}, with no error.

In the remaining of the proof, assume that $|x_{opt}|\leq An$. We now analyse the probability that Algorithm 1 fails (returns No) in this case, i.e., conditionally to the fact that $|x_{opt}|\leq An$ (all the probabilities in the sequel are conditional to $|x_{opt}|\leq An$).

Let $E$ denote the event that the algorithm returns a wrong answer (when $|x_{opt}|\leq An$). The algorithm fails iff  at each time step $|x^t|\geq Bn$. We get:
$$
\Prob{E}= \mathbb{P} \left [  \cap_{t=1}^{T} \{ |X_t|\geq Bn \} \right] \leq \Prob{X\geq TBn}
$$
where $X=\sum_{t=1}^{T} |X_t|$ is the random variable equal to the sum of sizes of (random) solutions outputs by $\mathcal{O}$. As $X\geq 0$, $\Prob{X\geq TBn}\leq \frac{\Exp{X}}{TBn}=\frac{\sum_{t=1}^T \Exp{|X_t|}}{TBn}$.
 Since the element of weight 1 at time $t$ is chosen uniformly at random, $\Exp{f^t(X_t)}= \Exp{|X_t|}/n$, so
$$
\Prob{E}\leq \frac{\Exp{X}}{TBn} \leq \frac{\sum_{t=1}^T \Exp{f^t(X_t)}}{TB}.
$$

Now, since $|x_{opt}|\leq An$, and since at each time $t$ there is only one element of weight 1 picked uniformly at random, $f^t(x_{opt})=1$ with probability at most $A$, we get
$
\sum_{t=1}^T  \Exp{f^t(x_{opt})} \leq AT.
$
Besides, as $\mathcal{O}$ is a vanishing $\alpha$-regret algorithm with $\alpha = (1 - \epsilon') \frac{B}{A}$, we have that:
$$
\sum_{t=1}^T \Exp{f^t(X_t)}\leq \alpha \sum_{t=1}^T \Exp{f^t(x_{opt})} + p(n)T^c\leq (1-\epsilon')BT+p(n)T^c.
$$
Hence, we deduce that 
$$ 
\Prob{E} \leq \frac{(1-\epsilon')BT+p(n)T^c}{BT}=1-\epsilon'+\frac{p(n)T^{c-1}}{B}.
$$
Choose parameter $T = \left(\frac{ B\epsilon' }{2p(n)}\right)^\frac{1}{c-1} = \left(\frac{ A \epsilon }{2p(n)B }\right)^\frac{1}{c-1}$, 
we get that $\Prob{E} \leq 1- \frac{\epsilon'}{2} = 1- \frac{A \epsilon}{2B}$, 
which is a constant strictly smaller than 1. Besides, the running time of Algorithm~\ref{alg:ReductionAlgorithm} is 
polynomial since it consists of $T$ (polynomial in the size of the problem) iterations and
the running time of each iteration is polynomial (as $\mathcal{O}$ is a polynomial time algorithm).  

In conclusion, if there exists a vanishing $\alpha$-regret algorithm for online min-max-$\pazocal{P}$,
then the $NP$-complete problem \textit{[A,B]-Gap}-$\pazocal{P}$ is in $RP$, implying $NP=RP$.
\end{proof}


The inapproximability (gap) results for the aforementioned problems give lower bounds on the approximation ratio $\alpha$ of any vanishing $\alpha$-regret algorithm for their online min-max version. Note that these lower bounds hold with an oblivious adversary (weights are chosen in a non adaptive way), and that it applies to many min-max problems that are (trivially) polynomially solvable. 

For instance, the online min-max dominating set problem has no vanishing constant-regret algorithm based on the approximation hardness in~\cite{RazS97}. We state the lower bound explicitly for the online min-max vertex cover problem in the following corollary, as we refer to it later by showing a matching upper bound. They are based on the hardness results for vertex cover in \cite{KhotMS18} and \cite{KhotR08} ($NP$-hardness and UGC-hardness, respectively).

\begin{corollary}   \label{cor:min-max-VC-lower}
The online min-max vertex cover problem does not admit a polynomial time vanishing $(\sqrt{2}-\epsilon)$-regret unless $NP=RP$. It does not admit a polynomial time vanishing $(2-\epsilon)$-regret unless Unique Game is in $RP$.
\end{corollary}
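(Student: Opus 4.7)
The plan is to obtain this as a direct instantiation of Theorem~\ref{th2} (applied to the cardinality problem $\pazocal{P} =$ Vertex Cover) once we identify two pairs $(A,B)$ for which the corresponding $[A,B]$-Gap-VC decision problem is known to be hard. So the proof reduces to quoting the right inapproximability results and rewriting them in the $[A,B]$-Gap form used by the theorem.

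First I would handle the $NP=RP$ statement. The result of Khot, Minzer and Safra \cite{KhotMS18} shows that it is $NP$-hard to approximate minimum vertex cover within a factor strictly less than $\sqrt{2}$. Concretely, for any $\epsilon>0$ they exhibit constants $0\le A<B\le 1$ with $B/A \ge \sqrt{2}-\epsilon/2$ such that distinguishing vertex-cover instances with $|x_{opt}|\le An$ from those with $|x_{opt}|\ge Bn$ is $NP$-complete, i.e.\ $[A,B]$-Gap-VC is $NP$-complete. Plugging these $A$ and $B$ into Theorem~\ref{th2} rules out any polynomial-time vanishing $\alpha$-regret algorithm for online min-max vertex cover with $\alpha < B/A$, and in particular with $\alpha = \sqrt{2}-\epsilon$, unless $NP=RP$.

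The second statement is handled in exactly the same way, but starting from the UGC-based hardness of Khot and Regev \cite{KhotR08}: for every $\epsilon>0$ there exist constants $A',B'$ with $B'/A' \ge 2-\epsilon/2$ such that $[A',B']$-Gap-VC is $NP$-hard under the Unique Games Conjecture. Re-reading the reduction of Theorem~\ref{th2}, a vanishing $(2-\epsilon)$-regret algorithm would yield an $RP$ algorithm for $[A',B']$-Gap-VC, hence an $RP$ algorithm for the Unique Games problem, giving the claimed conditional lower bound.

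No real obstacle is expected beyond matching the formulation. The only minor point worth being explicit about is that the inapproximability results in \cite{KhotMS18,KhotR08} are usually quoted as ratios between the optimum sizes in ``Yes'' and ``No'' instances; translating this into the $[A,B]$-Gap-$\pazocal{P}$ form of the present paper requires only rescaling by $n$, which is immediate since both sides of the gap are linear in $n$. Once this is in place, Theorem~\ref{th2} does all the work and the corollary follows with no additional computation.
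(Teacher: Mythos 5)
Your proposal matches the paper's own (implicit) derivation: the corollary is obtained exactly as you describe, by instantiating Theorem~\ref{th2} with the $[A,B]$-Gap-Vertex-Cover hardness of Khot--Minzer--Safra (ratio $\to\sqrt{2}$, $NP$-hardness) and Khot--Regev (ratio $\to 2$, UGC-hardness), with the UGC case rerunning the reduction to place Unique Games in $RP$. Your remark about rescaling the gap by $n$ is the only translation needed, and it is handled the same way in the paper.
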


Now, consider $NP$-complete cardinality problems which have no known inapproximability gap (for instance Vertex Cover in planar graphs, which admits a PTAS). Then we can show the following impossibility result (see Appendix~\ref{app:coro} for the proof).

\begin{corollary} \label{cor:no-no-regret}
If a cardinality problem $\pazocal{P}$ is $NP$-complete, then there is no vanishing regret algorithm for online min-max-$\pazocal{P}$ unless $NP=RP$.
\end{corollary}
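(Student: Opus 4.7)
The plan is to instantiate the reduction of Theorem~\ref{th2} directly with $\alpha=1$ and with $(A,B)$ read off from the $NP$-complete decision version of $\pazocal{P}$, then compensate for the fact that the resulting gap $B/A$ is only $1+\Omega(1/n)$ (not bounded away from $1$ by a constant) by standard one-sided amplification.

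Given an instance $(\pazocal{U},\pazocal{C},k)$ of the decision version of $\pazocal{P}$, with $n=|\pazocal{U}|$ and $k\leq n-1$, I would set $A=k/n$ and $B=(k+1)/n$; since $|x_{opt}|$ is an integer, the promise $|x_{opt}|\leq An$ or $|x_{opt}|\geq Bn$ holds automatically, and deciding the instance is exactly deciding whether $|x_{opt}|<Bn$. Assuming for contradiction a polynomial-time vanishing regret oracle $\pazocal{O}$, I would feed it into Algorithm~\ref{alg:ReductionAlgorithm} verbatim, with the size test $|x^t|<k+1$. The analysis from the proof of Theorem~\ref{th2} then goes through unchanged up to
\[
\Pr[E]\;\leq\; \frac{A+p(n)\,T^{c-1}}{B}\;=\;\frac{k+n\,p(n)\,T^{c-1}}{k+1},
\]
and choosing $T=\Theta\bigl((n\,p(n))^{1/(1-c)}\bigr)$, which is polynomial in $n$, forces $n\,p(n)\,T^{c-1}\leq 1/2$, so $\Pr[E]\leq 1-\frac{1}{2(k+1)}$. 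The ``no'' case still incurs zero error, since any feasible solution then has size at least $k+1=Bn$ and the test $|x^t|<k+1$ never fires.

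The only real deviation from Theorem~\ref{th2}, and the step I expect to require the most care, is that the per-run success probability on ``yes'' instances is merely $\Omega(1/n)$, not a constant, because the gap $B/A=1+1/k$ tends to $1$. I would absorb this by running the above procedure $M=\Theta(n)$ times independently and answering \texttt{Yes} iff at least one run does: this boosts the success probability on ``yes'' instances to a constant bounded away from $0$ while preserving perfect soundness on ``no'' instances, and the total running time stays polynomial. The resulting randomized polynomial-time algorithm places the $NP$-complete decision version of $\pazocal{P}$ in $RP$, giving $NP=RP$.
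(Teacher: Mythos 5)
Your proposal is correct and follows essentially the same route as the paper: both instantiate Theorem~\ref{th2} with $\alpha=1$, $A=k/n$, $B=(k+1)/n$, observe that the integrality of $|x_{opt}|$ makes the gap promise automatic, verify that $T$ stays polynomial, and arrive at a one-sided error of $1-\frac{1}{2(k+1)}$ on yes-instances with perfect soundness on no-instances. The only difference is that you spell out the $\Theta(n)$-fold independent amplification explicitly, whereas the paper dismisses this final step as well known for $RP$ membership.
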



\subsection{Min-max Vertex Cover: matching upper bound with Gradient Descent}\label{sec:minmaxvertexcover}

\par In this section we will present an online algorithm for the min-max vertex cover problem based on the 
classic \textit{Online Gradient Descent} (OGD) algorithm. In the latter, at every step the solution is obtained by 
updating the previous one in the direction of the (sub-)gradient of the objective and projecting to a feasible 
convex set. The particular nature of the min-max vertex cover problem is that the objective function is the 
$l_\infty$ norm and the set of feasible solutions (all subsets of the vertex set $V$ that are vertex covers) $\pazocal{X}$ is discrete (non-convex). 
In our algorithm, we consider the following standard relaxation of the problem:
$$
\min \max_{i\in V}  ~w_{i} x_{i} \quad
\text{s.t.} \quad x \in \mathcal{Q}: \quad  x_i + x_j \geq 1 ~ \forall (i,j)\in E, \qquad
     0 \leq x_i \leq 1 ~ \forall i\in V.
$$

At time step $t$,  after deciding on a (fractional) action $x^t\in \pazocal{Q}$ we update the solution by a sub-gradient 
$g^t(x^t) = [0,\dots, 0,w_i^t,0,\dots,0]$
with $w_i^t$ in coordinate $i^t(x^t) = \argmax_{1\leq i\leq n}w_i^tx_i^t$
and 0 in other coordinates. 
Moreover, after projecting the solution to the polytope $\mathcal{Q}$, we round the solution by a simple procedure:
if $x^{t+1}_{i} \geq 1/2$ then $X^{t+1}_{i} = 1$ and $X^{t+1}_{i} = 0$ otherwise. 
The formal algorithm is given in Algorithm~\ref{alg:mmvc:MinMaxVC}. 

\RestyleAlgo{boxruled}
\LinesNumbered
\begin{algorithm}[ht]
  \caption{OGD-based algorithm for Online MinMax Vertex Cover\label{alg:mmvc:MinMaxVC}}
 Select an arbitrary fractional vertex cover $x^1\in \mathcal{Q}$ .\\
 \For{ $t=1,2,\ldots$}{
 Round $x^{t}$ to $X^{t}$: $X^{t}_{i} = 1$ if $x^{t}_{i} \geq 1/2$ and $X^{t}_{i} = 0$ otherwise. \\
  Play $X^t \in \{0,1\}^{n}$. 
  Observe $w^t$ (weights of vertices) and incur the cost $g^t(X^t) = \max_{i} w^t_i X^{t}_{i}$.\\
  Update $y^{t+1} =  x^t - \frac{1}{\sqrt{t}}g^t(x^{t}) $.\\
  Project $y^{t+1}$ to $\mathcal{Q}$ w.r.t the $\ell_{2}$-norm: 
  	$x^{t+1} = \text{Proj}_{\mathcal{Q}} \bigl( y^{t+1} \bigr) := \arg \min_{x\in \mathcal{Q} } \| y^{t+1} - x \|_{2}$. 
  
 }
\end{algorithm}

The following theorem (see Appendix~\ref{app:thgd} for the proof), coupled with Corollary~\ref{cor:min-max-VC-lower}, show the tight bound of $2$ on the approximation ratio of polynomial-time online algorithms for Min-max Vertex Cover 
(assuming UGC conjecture).

\begin{theorem}\label{th:gd}
Assume that $W = \max_{1 \leq t \leq T} \max_{1 \leq i \leq n} w^t_i$ is the maximum assigned weight. Then,
after $T$ time steps, Algorithm~\ref{alg:mmvc:MinMaxVC} achieves
$$
\sum_{t=1}^{t} \max_{1 \leq i \leq n} w^t_i X^{t}_{i}
\leq 2 \cdot \min_{X^{*} \in \pazocal{X}} \sum_{t=1}^{t} \max_{1 \leq i \leq n} w^t_i X^{*}_{i} + 
	3 W \sqrt{nT}
$$
\end{theorem}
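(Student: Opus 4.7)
\textbf{Plan of proof of Theorem~\ref{th:gd}.} The plan is to combine the classical Online Gradient Descent analysis on the convex LP relaxation $\mathcal{Q}$ with the standard half-integer LP-rounding for vertex cover, losing a factor of $2$ in the rounding and gaining only a sublinear additive term from OGD.

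First, I would verify that the rounded iterate $X^t$ is always a feasible (integer) vertex cover. Since $x^t \in \mathcal{Q}$, every edge $(i,j) \in E$ satisfies $x_i^t + x_j^t \geq 1$, so at least one of the two endpoints has fractional value at least $1/2$ and is rounded up to $1$; hence $X^t \in \pazocal{X}$ for every $t$. Next I would relate the integral cost to the fractional cost pointwise: if $X_i^t = 0$ then $w_i^t X_i^t = 0$, while if $X_i^t = 1$ then $x_i^t \geq 1/2$ and so $w_i^t X_i^t \leq 2 w_i^t x_i^t$. Taking the max over $i$ yields
$$
g^t(X^t) \;=\; \max_i w_i^t X_i^t \;\leq\; 2 \max_i w_i^t x_i^t \;=:\; 2 f^t(x^t),
$$
where $f^t$ is the convex fractional objective (a maximum of linear functions).

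Second, I would apply the standard OGD regret analysis to the convex losses $f^t$ on the convex compact set $\mathcal{Q}$. The vector $g^t(x^t) = w_{i^t}^t\, e_{i^t}$ is indeed a valid subgradient of $f^t$ at $x^t$; the Euclidean diameter of $\mathcal{Q} \subseteq [0,1]^n$ satisfies $D \leq \sqrt{n}$, and every subgradient has $\ell_2$-norm at most $W$. With a step size of order $1/\sqrt{t}$ and the classic telescoping argument on $\|x^t - x^*\|_2^2 \cdot (1/\eta_t - 1/\eta_{t-1})$ (\`a la Zinkevich), this gives
$$
\sum_{t=1}^{T} f^t(x^t) \;-\; \min_{x \in \mathcal{Q}} \sum_{t=1}^{T} f^t(x) \;\leq\; \tfrac{3}{2}\, W \sqrt{nT}.
$$
Since every integral vertex cover $X^* \in \pazocal{X}$ is feasible for $\mathcal{Q}$, the minimum over $\mathcal{Q}$ is a lower bound on the minimum over $\pazocal{X}$.

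Finally, I would chain the three steps to conclude:
$$
\sum_{t=1}^{T} g^t(X^t) \;\leq\; 2 \sum_{t=1}^{T} f^t(x^t) \;\leq\; 2 \min_{X^* \in \pazocal{X}} \sum_{t=1}^{T} f^t(X^*) \;+\; 3 W \sqrt{nT},
$$
which is exactly the claim. The only delicate point, and the main obstacle, is calibrating the OGD constants to reach the stated $3 W \sqrt{nT}$: with the written step size $\eta_t = 1/\sqrt{t}$ one must balance the $D^2/(2\eta_T)$ and $(G^2/2)\sum_t \eta_t$ terms using $D \le \sqrt{n}$ and $G \le W$ (equivalently, one may rescale coordinates or choose a constant step size $\eta = \sqrt{n}/(W\sqrt{T})$); everything else is a combination of LP feasibility, convexity of the max, and the textbook OGD telescoping.
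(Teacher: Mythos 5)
Your proposal is correct and follows essentially the same route as the paper's proof: apply the classical OGD regret bound on the convex relaxation $\mathcal{Q}$ with diameter $D\leq\sqrt{n}$ and gradient bound $G\leq W$, then use the pointwise factor-$2$ relation $\max_i w^t_i X^t_i \leq 2\max_i w^t_i x^t_i$ from the half-integral rounding, and chain the two (the factor $2$ multiplying the $\tfrac{3}{2}W\sqrt{nT}$ regret term yields the stated $3W\sqrt{nT}$). The additional checks you make explicit --- feasibility of $X^t$ as a vertex cover and validity of the subgradient --- are left implicit in the paper but are correct.
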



\section{Computational issues for Follow the Leader based methods}\label{sec:FTL}

\par In the previous section, we proved that a large family of online-learning problems does not admit a vanishing regret algorithm (Corollary \ref{cor:no-no-regret}) even though they can be easily solved in polynomial time in the offline case. The purpose of this section is to further explore the role of non-linearity in the hardness gap between an offline problem and the its corresponding online-learning variant. In order to do so,  we focus our attention to the family of \textit{Follow the Leader (FTL)} algorithms that has been extensively studied and used for the online-learning setting.

\par We begin by providing the reader with a short introduction to the \textit{Follow the Leader} based methods. The most natural approach in online learning is for the player to always pick the \textit{leading action}, i.e. the action $x^t$ that is optimal to the observed history $y^1,\dots, y^{t-1}$. However it can be proven (\cite{KalaiVempala}) that any deterministic algorithm that always decides on the leading action can be ``tricked'' by the adversary in order to make decision that are worse than the optimal action in hindsight, thus leading to large regret algorithms. On this regard, we need to add a regularization term containing randomness to the optimization oracle in order to make our algorithms less predictable and more stable. Thus, the \textit{Follow the Regularized Leader} strategy in a minimization problem, consists of deciding on an action $x^t$ such that:

$$
x^t = \argmin_{x\in\pazocal{X}}\left(\sum_{\tau=1}^{t-1}f(x,y^\tau) + R(x)\right)
$$
where $R(x)$ is the regularization term.

\par There are many variations of the Follow the Regularized Leader (FTRL) algorithm that differentiate on the applied objective functions and the type of regularization term. For linear objectives, Kalai and Vempala \cite{KalaiVempala} suggested the Follow the Perturbed Leader algorithm where the regularization term is simply the cost/payoff of each action on a randomly generated instance of the problem. Dudik et al. \cite{DudikHaghtalab17:Oracle-efficient-online} were able to generalize the FTPL algorithm of Kalai and Vempala~\cite{KalaiVempala} for non-linear objectives, by introducing the concept of \textit{shared randomness} and a much more complex perturbation mechanism.

\par A common element between every Follow the Leader based method, is the need for an optimization oracle over the observed history of the problem. This is a minimum requirement since the regularization term can make determining the leader even harder, but most algorithms are able to map the perturbations to the value of the objective function on a set of instances of the problem and thus eliminate this extra complexity. In particular, Kalai and Vempala \cite{KalaiVempala} showed that for every linear objective, if the offline version of the problem can be solved exactly or it admits a FPTAS, then the same holds for the multi-instance offline version of the problem (solved by the oracle) which leads to a vanishing regret algorithm for the online-learning variant of the problem.

\par On the same note, Dudik et al. \cite{DudikHaghtalab17:Oracle-efficient-online} introduced the \textit{Generalized Follow the Perturbed Leader} (GFTPL) algorithm. While analyzing the algorithm, they showed that an exact or an additive error oracle that solves the multi-instance problem is sufficient to achieve the same regret bounds under some extra mild assumptions, even for non-linear objectives. In this section, we will turn our focus to the existence of such oracles that can be efficiently computed. 

\par If the problem solved by the oracle is $NP$-hard, having an efficient algorithm that solves it with any additive error $\epsilon$ is quite improbable. We remark that the assumption of an additive error $\epsilon$ can be replaced by the assumption of the existence of a FPTAS for the oracle. This is captured by the following Theorem:

\begin{theorem}\label{th:FPTAS} 
The upper bounds of the Generalized Follow the Perturbed Leader algorithm hold even under the assumption that there exists a FPTAS oracle instead of an additive error oracle.
\end{theorem}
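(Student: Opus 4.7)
The plan is to show that any FPTAS for the multi-instance offline problem can be converted into an additive-error oracle of the type assumed by Dudik et al.~\cite{DudikHaghtalab17:Oracle-efficient-online}, at the cost of only a polynomial blow-up in running time. Once such a black-box reduction is in place, one can plug the simulated additive oracle into the GFTPL regret analysis without altering any of its arguments, thereby inheriting the same regret bounds verbatim.

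First I would recall the form of the cumulative objective the oracle is called upon at each round $t$, namely
$$F^t(x) = \sum_{\tau=1}^{t-1} f(x,y^\tau) + R(x),$$
where $R(\cdot)$ denotes the perturbation produced by the shared-randomness mechanism of GFTPL, and let $x_*^t = \argmin_{x\in\pazocal{X}} F^t(x)$ denote the perturbed leader. Let $U^t$ be an explicit upper bound on $F^t(x_*^t)$. I would then argue that $U^t$ is polynomial in the instance size $n$ and in the horizon $T$ under the same mild value-boundedness assumptions that already underlie the GFTPL analysis (i.e.\ per-round losses bounded by a polynomial in $n$, so that $\sum_{\tau=1}^{t-1} f(x,y^\tau) \le \mathrm{poly}(n,T)$ and the perturbation $R$ is either naturally polynomially bounded or can be truncated to a polynomial threshold with negligible impact).

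Given a desired additive tolerance $\varepsilon > 0$, I would run the FPTAS with precision parameter $\varepsilon' := \varepsilon / U^t$. By definition of the FPTAS, the returned solution $\tilde x^t$ satisfies
$$F^t(\tilde x^t) \leq (1+\varepsilon')\,F^t(x_*^t) \leq F^t(x_*^t) + \varepsilon'\, U^t = F^t(x_*^t) + \varepsilon,$$
which is exactly the additive-$\varepsilon$ oracle assumed in Dudik et al. The running time is $\mathrm{poly}(n, 1/\varepsilon') = \mathrm{poly}(n, U^t, 1/\varepsilon)$, hence polynomial in $n$, $T$, and $1/\varepsilon$; for any inverse-polynomial choice of $\varepsilon$ the overall per-round cost remains polynomial. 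Since the dependence of the GFTPL regret bound on the oracle error enters additively (and linearly over the $T$ rounds), choosing $\varepsilon$ small enough (typically $1/\mathrm{poly}(T)$) preserves the original vanishing-regret guarantee.

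The main obstacle I expect is the bookkeeping around $U^t$, and in particular the contribution of the perturbation $R$: the distribution used by GFTPL may have unbounded tails, so one either has to condition on a high-probability polynomial upper bound or adapt the distribution with a suitable polynomial truncation whose effect on the expected regret is shown to be negligible. Once this step is settled, the rest of the argument is a transparent substitution of the simulated oracle into the existing GFTPL proof, and the stated regret upper bounds follow without any further modification.
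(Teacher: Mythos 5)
Your proposal is correct and follows essentially the same route as the paper: the appendix proof also converts the FPTAS into an additive-$\epsilon$ oracle by running it with multiplicative precision $\epsilon' = \epsilon/(TF_M + N\eta\Gamma_M)$, i.e.\ dividing by exactly the explicit upper bound on the perturbed cumulative payoff that you call $U^t$, and then reusing the GFTPL analysis unchanged with $\epsilon' = O(T^{-3/2})$. The only superfluous step in your plan is the worry about unbounded perturbation tails: in GFTPL the perturbation vector is drawn from the bounded hypercube $[0,\eta]^N$ and the translation matrix entries lie in $[\gamma_m,\gamma_M]$, so the perturbation term is deterministically bounded by $N\eta\gamma_M$ and no truncation or conditioning is needed.
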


\begin{proof}
While the proof relies on a single technical observation, it requires to establish the full framework of the GFTPL algorithm. On that note, it is moved to the Appendix \ref{app:a2} and is presented after a (brief) introduction to the GFTPL algorithm (Appendix \ref{app:a1}).
\end{proof}



We now give a concrete example of the way that GFTPL algorithm can be applied for non-linear objectives in discrete problems using an FPTAS oracle. Let us consider the {\it generalized knapsack problem} where we are allowed to exceed the capacity of the knapsack but then pay a penalty proportional to the excess weight. Formally: 

\begin{definition}[Generalized Knapsack Problem (GKP)]
Given a set of items $i=1,2,...,n$ with non-negative weights $w_i$ and non-negative profits $p_i$, a knapsack capacity $B\in R_+$ and a constant $c\in R_+$, determine a set of items $A\subseteq [n]$ that maximizes the total profit
$\texttt{profit}(A) = \sum_{i\in A}p_i - c\max\{0, \sum_{i\in A}w_i - B \}$.
\end{definition}

This problem, as well as generalizations with other penalty costs for overweight, have been studied for instance in \cite{BarmanUCM12, Antoniadis} (see there for practical motivations). In an online learning setting, we assume that the capacity of the knapsack and the profit of objects may change over time (while weights $w_i$ and a the constant $c$ are static). 

Since the problem is not linear, we use the the \textit{generalized FTPL} (GFTPL) framework of Dudik et al. \cite{DudikHaghtalab17:Oracle-efficient-online}, which uses a much more complex perturbation mechanism than the ``extra'' random observation in FTPL.  

\begin{theorem}\label{th7}
There is a polynomial time vanishing regret algorithm for GKP.
\end{theorem}
The proof is based on showing that (1) the GFTPL method can be applied to GKP (see appendix~\ref{app:a3}) and (2) there exists a FPTAS for the multi-instance oracle (see appendix~\ref{gkp:fptas}). Then Theorem~\ref{th:FPTAS} applies.

\par The proof of Theorem \ref{th7} also leads to an interesting observation; although we were able to acquire a vanishing regret algorithm for GKP which is non-linear, we were ``forced'' to solve a more general problem (that of the convex knapsack) in order to get the desired oracle. This indicates that the non-linear nature of the problem lead to a significant increase in its online-learning version, offering some useful insight towards question $(Q2)$.


\subsection{Computational hardness results}\label{sec:FTL_Limitations}

\par To the best of our knowledge, up to now FTL algorithms for non-linear objective functions require an exact or a FPTAS oracle in order to obtain vanishing regret. Thus, strong $NP$-hardness for the multiple instance version of the offline problem indicates that the FTL strategy cannot be used for the online problem, at least with our current knowledge.

\par As we mentioned, algorithms that use the ``\textit{Follow the Leader}'' strategy heavily rely on the existence of an optimization oracle for the multi-instance version of the offline problem.
For linear objectives, it is easy to see (\cite{KalaiVempala}) that optimization over a set of instances is equivalent to optimization over a single instance and thus any algorithm for the offline problem can be transformed to an online learning algorithm. However, for non-linear problems this assumption is not always justified since even when the offline problem is polytime-solvable, the corresponding variation with multiple instances can be strongly $NP$-hard. 

\par In this section we present some problems where we can prove that the optimum solution over a set of instances is hard to approximate.
More precisely, in the {\it multi-instance} version of a given problem, we are given an integer $N>0$,  a set of feasible solutions $\pazocal{X}$, and $N$ objective functions $f_1,\dots,f_N$ over $\pazocal{X}$. The goal is to minimize (over $\pazocal{X}$)  $\sum_{i=1}^N f_i(x)$. 

We will show computational hardness results for the multi-instance versions of:
\begin{itemize}
    \item min-max vertex cover (already defined).
    \item min-max perfect matching,  where we are given an undirected graph $G(V,E)$ and a weight function $w:E\rightarrow \real^+$ on the edges and we need to determine a \textit{perfect matching} such that the weight of the heaviest edge on the matching is minimized.
    \item min-max path,  where we are given an undirected graph $G(V,E)$, two vertices $s$ and $t$, and a weight function $w:E\rightarrow \real^+$ on the edges and we need to determine an $s-t$ path  such that the weight of the heaviest edge in the path is minimized.
    \item $P3||Cmax$, where we are given $3$ \textit{identical} parallel machines, a set of $n$-jobs $J=\{j_1,\dots, j_n\}$ and processing times $p:J\rightarrow \real^+$ and we need to determine a schedule of the jobs to the machines (without preemption) such that the makespan, i.e. the time that elapses until the last job is processed, is minimized.
\end{itemize}
Hence, in the multi-instance versions of these problems, we are given $N$ weight functions over vertices (min-max vertex cover) or edges (min-max perfect matching, min-max path), or $N$ processing time vectors ($P3||Cmax$).

\begin{theorem}\label{th:hardoracle}
The multi-instance versions of min-max perfect matching, min-max path,  min-max vertex cover and $P3||Cmax$ are strongly $NP$-hard.
\end{theorem}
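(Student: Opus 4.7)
The plan is to give four reductions, each from a suitable strongly $NP$-hard source, all exploiting the observation that the objective $\sum_{i=1}^N \max(\cdot)$ couples the $N$ instances in a way that a single weight (or processing-time) vector cannot pin down the optimum.

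For \textbf{min-max vertex cover}, the cleanest reduction is from strongly $NP$-hard \textsc{Minimum Vertex Cover} on $G=(V,E)$. Set $N=|V|$ and, in instance $i$, put weight $1$ on $v_i$ and $0$ elsewhere. For any vertex cover $C$, $\max_{v\in C} w_i(v) = \mathbf{1}[v_i\in C]$, so the multi-instance objective equals $|C|$ exactly, and a minimum-weight multi-instance solution is a minimum vertex cover of $G$.

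For \textbf{$P3||C_{\max}$}, I would reduce from \textsc{3-Coloring}. Given $G=(V,E)$, create one job per vertex and one scenario per edge; in the scenario for edge $(u,v)$, set $p_u=p_v=1$ and every other processing time to $0$. Since the three machines are identical, a job-to-machine assignment is a $3$-coloring of $V$; the scenario-$(u,v)$ makespan equals $1$ if $u$ and $v$ are placed on different machines and $2$ if they share a machine. Summing, the multi-instance objective equals $|E|$ plus the number of monochromatic edges, and reaches the threshold $|E|$ iff $G$ is properly $3$-colorable.

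For \textbf{min-max $s$-$t$ path} and \textbf{min-max perfect matching}, I would reduce from the multi-instance min-max vertex cover just established. For the path version, construct a chain $s=u_0-u_1-\cdots-u_n=t$ in which between $u_{i-1}$ and $u_i$ there are two parallel edges $e_i^+,e_i^-$ (``include''/``exclude'' $v_i$), so that $s$-$t$ paths correspond bijectively to subsets $C\subseteq V$; transfer each VC weight function via $w_k(e_i^+)=w_k(v_i)$ and $w_k(e_i^-)=0$, reproducing the VC objective exactly on the path side. For min-max perfect matching, replace each parallel-edges gadget by a $4$-cycle admitting exactly two perfect matchings (``include'' and ``exclude''), attached through forced matched edges so that perfect matchings of the whole graph correspond bijectively to subsets $C\subseteq V$.

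The main obstacle, shared by these last two reductions, is that one must additionally enforce that $C$ is a valid vertex cover of $G$. Since $\max$ is OR-like while the edge-cover constraint is conjunctive (``for every $(u,v)\in E(G)$, at least one of $u,v$ is in $C$''), a naive penalty weight $M$ on the exclusion edges $e_u^-,e_v^-$ would trigger whenever at least one endpoint is excluded rather than only when both are; the feasibility check must therefore be encoded combinatorially through a per-edge graph gadget, for instance a forked sub-graph of the chain that every $s$-$t$ traversal can only cross by going through an ``include'' edge of $v_u$ or $v_v$, and analogously a matching gadget with a perfect matching iff at least one of the two ``include'' configurations is selected. This same OR-versus-AND mismatch is exactly what the min-max VC reduction sidesteps by being additive, and what the $P3||C_{\max}$ reduction handles by decoupling the conjunction edge-by-edge across scenarios, echoing the paper's central theme that nonlinear multi-instance problems are computationally richer than their linear counterparts.
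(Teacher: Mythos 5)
Your reductions for multi-instance min-max vertex cover (weight functions $w^i$ putting weight $1$ on $v_i$ alone, so the objective equals $|C|$) and for multi-instance $P3||C_{\max}$ (one job per vertex, one scenario per edge, objective $|E|$ plus the number of monochromatic edges) are correct and are exactly the reductions the paper uses. For those two problems there is nothing to add.

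For min-max path and min-max perfect matching, however, your argument has a genuine gap, and you have in fact located it yourself: reducing from multi-instance min-max vertex cover forces you to encode the covering constraint (``for every $(u,v)\in E(G)$, include $u$ or $v$'') into the \emph{feasibility structure} of the path/matching instance, and you never construct the gadget that does this. The difficulty is real, not cosmetic. In your chain construction each variable gadget is traversed exactly once, so a ``fork'' tying together the gadgets of two endpoints $u,v$ that are far apart in the chain would let the path bypass the intermediate gadgets, destroying the bijection between $s$-$t$ paths and subsets of $V$; for perfect matchings, grafting an ``at least one of two configurations'' clause gadget onto several 4-cycles while keeping every global perfect matching consistent is precisely the delicate part of such reductions, and asserting that it can be done is not a proof. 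The paper avoids the issue entirely by choosing a source problem whose solution space is \emph{unconstrained}: it reduces from Max-3-DNF. Every truth assignment is feasible, so the path/matching instance only needs the trivial one-gadget-per-variable structure (two parallel choices per variable), and all the hardness is carried by the objective. The key observation is that a conjunctive clause $C_j$ is satisfied iff \emph{all} its literals are true, which translates exactly to ``the maximum weight of the chosen solution under $w^j$ is $0$'' --- the min-max objective natively expresses the AND, so $val(\pazocal{I},\sigma)=m-val(\pazocal{I}',M_\sigma)$ and strong $NP$-hardness (indeed $APX$-hardness) follows with no feasibility gadget at all. To repair your write-up, either carry out the missing per-edge gadgets in full, or switch the source problem for these two cases to Max-3-DNF as the paper does.
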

\begin{proof}


Let us present the proof for the multi-instance version of the min-max perfect matching and the min-max path problems, which use a similar reduction from the Max-3-DNF problem. Due to lack of space, the proof for the two other problems are in Appendix~\ref{app:proof}.

In the Max-3-DNF problem, we are given a set of \textit{n} boolean variables $X=\{x_1,\dots,x_n\}$  and \textit{m} clauses $C_1,\dots, C_m$ that are conjunctions of \textit{three} variables in $X$ or their negations and we need to determine a truth assignment $\sigma: X\rightarrow \{T,F\}$ such that the number of satisfied clauses is maximized.
%
We start with the multi-instance min-max perfect matching problem. For every instance $\pazocal{I}$ of the Max-3-DNF problem we construct a graph $G(V,E)$ and $m$ weight functions defined as follows: \begin{itemize}
    \item To each variable $x_i$ is associated a 4-cycle on vertices $(u_i, u_i^t, \overline{u_i}, u_i^f)$. This 4-cycle has two perfect matchings: either $u_i$ is matched with $u_i^t$ and $\overline{u_i}$ is matched with $u_i^f$, corresponding to setting the variable $x_i$ to true, or vice-versa, corresponding to setting $x_i$ to false. This specifies a one-to-one correspondence between the solutions of the two problems.
    \item Each weight function corresponds to one conjunction: $w^j(u_iu_i^t) = 1$ if $\neg x_i\in C_j$, otherwise $w^j(u_iu_i^t) = 0$. Edges incident to vertices $\overline{u_i}$ always get weight 0. 
\end{itemize}

\par The above construction can obviously be done in polynomial time to the size of the input. It remains to show the correlation between the objective values of these solutions. If a clause $C_j$ is satisfied by a truth assignment $\sigma$ then (since it is a conjunction) every literal on the clause must be satisfied. From the construction of the instance $\pazocal{I}'$ of multi-instance min-max matching, the corresponding matching $M_\sigma$ will have a maximum weight of 0 for the weight function $w^j$. If a clause $C_j$ is not satisfied by a truth assignment, then the corresponding matching $M_\sigma$ will have a maximum weight of 1 for the weight function $w^j$. Thus, 
we get: 
$
val(\pazocal{I},\sigma) = m - val(\pazocal{I}',M_\sigma)
$,
where $val$ stands for the value of a solution. This equation already proves the hardness result of Theorem \ref{th:hardoracle}. It actually also shows $APX$-hardness. Indeed, the optimal value OPT of Max-3-DNF verifies $\frac{m}{8}\leq OPT \leq m$. Assuming the existence of a $(1+\epsilon)$ approximation algorithm for multi-instance min-max perfect matching problem, we can get a $(1-7\epsilon)$ approximation algorithm for Max-3-DNF. Since Max-3-DNF is $APX$-Hard, multi-instance min-max perfect matching is also $APX$-Hard.

A similar reduction leads to the same result for the min-max path problem: starting from an instance of 3-DNF, build a graph $G$ where $V=\{v_0,v_1,\dots,v_n\}$. Vertex $v_i$ corresponds to variable $x_i$ There are two arcs $e^t_i$ and $e^f_i$ between $v_{i-1}$ and $v_i$. We are looking for $v_0-v_n$ paths. Taking edge $e^t_i$ (resp.  $e^f_i$) corresponds to setting $x_i$ to true (resp. false). As previously this gives a one-to-one correspondence between solutions. Each clause corresponds to one weight function: if $x_i\in C_j$ then $w^j(e^f_i)=1$, if $\neg x_i\in C_j$ then $w^j(e^t_i)=1$. All other weights are 0. Then for a $v_0-v_n$ path $P$, $w^j(P)=0$ if and only if $C_j$ is satisfied by the corresponding truth assignment. The remainder of the proof is exactly the same as the one of min-max perfect matching.
\end{proof}

\par Theorem \ref{th:hardoracle} gives insight on the hardness of non-linear multi-instance problems compared to their single-instance counterparts. As we proved, the multi-instance $P3||Cmax$ is strongly NP-Hard while $P3||Cmax$ is known to admit a FPTAS \cite{Sahni:1976,Woeginger}. Also, the multi-instance version of min-max perfect matching, min-max path and min-max vertex cover are proved to be $APX$-Hard while their single-instance  versions can be solved in polynomial time. We also note that these hardness results hold for the very specific case where weights/processing times are in $\{0,1\}$, for which $P||Cmax$, as well as the other problems, become trivial. This shows that non-linearity indeed plays an important role to the computational hardness of an online-learning problem, partially answering question $(Q2)$ addressed in the introduction.

\par We also note that the inapproximability bound we acquired for the multi-instance min-max vertex cover under UGC is tight, since we can formulate the problem as a linear program, solve its continuous relaxation and then use a rounding algorithm to get a vertex cover of cost at most twice the optimum for the problem. The results on the min-max vertex cover problem also provide some answer to question $(Q2)$ addressed in the introduction. As we proved in Section \ref{sec:minmaxvertexcover}, the \textit{online gradient descent} method (paired with a rounding algorithm) suffices to give a vanishing 2-regret algorithm for online min-max vertex cover. However, since the multi-instance version of the problem is APX-hard there is no indication that the FTL approach can be used 
to get the same result and match the lower bound of Corollary \ref{cor:min-max-VC-lower}.


\section{Conclusion}
In the paper, we have presented a general framework showing the hardness of online learning algorithms 
for min-max problems. We have also showed a sharp separation between two widely-studied online learning algorithms, online gradient descent and FTL, from the approximation and computational complexity aspects. 
The paper gives rise to several interesting directions. A first one is to extend the reduction framework to objectives other than min-max. A second direction is to design online vanishing regret algorithms with
approximation ratio matched to the lower bound guarantee. 


\begin{thebibliography}{27}
\providecommand{\natexlab}[1]{#1}
\providecommand{\url}[1]{\texttt{#1}}
\expandafter\ifx\csname urlstyle\endcsname\relax
  \providecommand{\doi}[1]{doi: #1}\else
  \providecommand{\doi}{doi: \begingroup \urlstyle{rm}\Url}\fi

\bibitem[Agarwal et~al.(2019)Agarwal, Gonen, and
  Hazan]{AgarwalGonen19:Learning-in-Non-convex}
Naman Agarwal, Alon Gonen, and Elad Hazan.
\newblock Learning in non-convex games with an optimization oracle.
\newblock In \emph{Proc. 32nd Conference on Learning Theory}, volume~99, pages
  18--29, 2019.

\bibitem[Antoniadis et~al.(2013)Antoniadis, Huang, Ott, and
  Verschae]{Antoniadis}
Antonios Antoniadis, Chien-Chung Huang, Sebastian Ott, and Jos{\'e} Verschae.
\newblock How to pack your items when you have to buy your knapsack.
\newblock In \emph{Proc. Mathematical Foundations of Computer Science}, pages
  62--73, 2013.

\bibitem[Awerbuch and Kleinberg(2008)]{Awerbuch}
Baruch Awerbuch and Robert Kleinberg.
\newblock Online linear optimization and adaptive routing.
\newblock \emph{Journal of Computer and System Sciences}, 74\penalty0
  (1):\penalty0 97 -- 114, 2008.

\bibitem[Barman et~al.(2012)Barman, Umboh, Chawla, and Malec]{BarmanUCM12}
Siddharth Barman, Seeun Umboh, Shuchi Chawla, and David~L. Malec.
\newblock Secretary problems with convex costs.
\newblock In \emph{Proc. 39th Colloquium on Automata, Languages, and
  Programming}, pages 75--87, 2012.

\bibitem[Berman and Handler(1987)]{berman}
Oded Berman and Gabriel~Y. Handler.
\newblock Optimal minimax path of a single service unit on a network to
  nonservice destinations.
\newblock \emph{Transportation Science}, 2\penalty0 (21):\penalty0 115–--122,
  1987.

\bibitem[Camerini(1978)]{Camerini78}
Paolo~M. Camerini.
\newblock The min-max spanning tree problem and some extensions.
\newblock \emph{Inf. Process. Lett.}, 7\penalty0 (1):\penalty0 10--14, 1978.

\bibitem[{Cesa-Bianchi} et~al.(2015){Cesa-Bianchi}, {Gentile}, and
  {Mansour}]{Cesa-Bianchi}
N.~{Cesa-Bianchi}, C.~{Gentile}, and Y.~{Mansour}.
\newblock Regret minimization for reserve prices in second-price auctions.
\newblock \emph{IEEE Transactions on Information Theory}, 61\penalty0
  (1):\penalty0 549--564, Jan 2015.

\bibitem[Daskalakis and
  Syrgkanis(2016)]{DaskalakisSyrgkanis16:Learning-in-auctions:}
Constantinos Daskalakis and Vasilis Syrgkanis.
\newblock Learning in auctions: Regret is hard, envy is easy.
\newblock In \emph{Proc. 57th Symposium on Foundations of Computer Science},
  pages 219--228, 2016.

\bibitem[Dudik et~al.(2017)Dudik, Haghtalab, Luo, Schapire, Syrgkanis, and
  Vaughan]{DudikHaghtalab17:Oracle-efficient-online}
Miroslav Dudik, Nika Haghtalab, Haipeng Luo, Robert~E Schapire, Vasilis
  Syrgkanis, and Jennifer~Wortman Vaughan.
\newblock Oracle-efficient online learning and auction design.
\newblock In \emph{Proc. 58th Symposium on Foundations of Computer Science
  (FOCS)}, pages 528--539, 2017.

\bibitem[Freund and
  Schapire(1997)]{FreundSchapire97:A-decision-theoretic-generalization}
Yoav Freund and Robert~E Schapire.
\newblock A decision-theoretic generalization of on-line learning and an
  application to boosting.
\newblock \emph{Journal of computer and system sciences}, 55\penalty0
  (1):\penalty0 119--139, 1997.

\bibitem[Gabow and Tarjan(1988)]{GabowT88}
Harold~N. Gabow and Robert~Endre Tarjan.
\newblock Algorithms for two bottleneck optimization problems.
\newblock \emph{J. Algorithms}, 9\penalty0 (3):\penalty0 411--417, 1988.

\bibitem[Garber(2017)]{Garber17:Efficient-online}
Dan Garber.
\newblock Efficient online linear optimization with approximation algorithms.
\newblock In \emph{Advances in Neural Information Processing Systems}, pages
  627--635, 2017.

\bibitem[Hannan(1957)]{Hannan57:Approximation-to-Bayes}
James Hannan.
\newblock Approximation to bayes risk in repeated play.
\newblock \emph{Contributions to the Theory of Games}, 3:\penalty0 97--139,
  1957.

\bibitem[Hazan(2016)]{Hazanothers16:Introduction-to-online}
Elad Hazan.
\newblock Introduction to online convex optimization.
\newblock \emph{Foundations and Trends in Optimization}, 2\penalty0
  (3-4):\penalty0 157--325, 2016.

\bibitem[Hazan and Kale(2012)]{HazanKale12:Online-submodular}
Elad Hazan and Satyen Kale.
\newblock Online submodular minimization.
\newblock \emph{Journal of Machine Learning Research}, 13:\penalty0 2903--2922,
  2012.

\bibitem[Hazan and Koren(2016)]{HazanKoren16:The-computational-power}
Elad Hazan and Tomer Koren.
\newblock The computational power of optimization in online learning.
\newblock In \emph{Proc. 48th Symposium on Theory of Computing}, pages
  128--141, 2016.

\bibitem[Kakade et~al.(2009)Kakade, Kalai, and
  Ligett]{KakadeKalai09:Playing-games}
Sham~M Kakade, Adam~Tauman Kalai, and Katrina Ligett.
\newblock Playing games with approximation algorithms.
\newblock \emph{SIAM Journal on Computing}, 39\penalty0 (3):\penalty0
  1088--1106, 2009.

\bibitem[Kalai and Vempala(2005)]{KalaiVempala}
Adam Kalai and Santosh Vempala.
\newblock Efficient algorithms for online decision problems.
\newblock \emph{Journal of Computer and System Sciences}, 71\penalty0
  (3):\penalty0 291 -- 307, 2005.

\bibitem[Khot and Regev(2008)]{KhotR08}
Subhash Khot and Oded Regev.
\newblock Vertex cover might be hard to approximate to within 2-epsilon.
\newblock \emph{J. Comput. Syst. Sci.}, 74\penalty0 (3):\penalty0 335--349,
  2008.

\bibitem[Khot et~al.(2018)Khot, Minzer, and Safra]{KhotMS18}
Subhash Khot, Dor Minzer, and Muli Safra.
\newblock Pseudorandom sets in grassmann graph have near-perfect expansion.
\newblock In \emph{Proc. 59th Symposium on Foundations of Computer Science},
  pages 592--601, 2018.

\bibitem[Littlestone and
  Warmuth(1994)]{LittlestoneWarmuth94:The-weighted-majority}
Nick Littlestone and Manfred~K Warmuth.
\newblock The weighted majority algorithm.
\newblock \emph{Information and computation}, 108\penalty0 (2):\penalty0
  212--261, 1994.

\bibitem[Raz and Safra(1997)]{RazS97}
Ran Raz and Shmuel Safra.
\newblock A sub-constant error-probability low-degree test, and a sub-constant
  error-probability {PCP} characterization of {NP}.
\newblock In \emph{Proc. 29th Symposium on the Theory of Computing}, pages
  475--484, 1997.

\bibitem[Sahni(1976)]{Sahni:1976}
Sartaj~K. Sahni.
\newblock Algorithms for scheduling independent tasks.
\newblock \emph{J. ACM}, 23\penalty0 (1):\penalty0 116--127, 1976.

\bibitem[Shalev-Shwartz et~al.(2012)]{Shalev-Shwartzothers12:Online-learning}
Shai Shalev-Shwartz et~al.
\newblock Online learning and online convex optimization.
\newblock \emph{Foundations and Trends in Machine Learning}, 4\penalty0
  (2):\penalty0 107--194, 2012.

\bibitem[Syrgkanis et~al.(2016)Syrgkanis, Krishnamurthy, and
  Schapire]{SyrgkanisKrishnamurthy16:Efficient-algorithms}
Vasilis Syrgkanis, Akshay Krishnamurthy, and Robert Schapire.
\newblock Efficient algorithms for adversarial contextual learning.
\newblock In \emph{International Conference on Machine Learning}, pages
  2159--2168, 2016.

\bibitem[Woeginger(2000)]{Woeginger}
Gerhard~J. Woeginger.
\newblock When does a dynamic programming formulation guarantee the existence
  of a fully polynomial time approximation scheme ({FPTAS})?
\newblock \emph{INFORMS J. on Computing}, 12\penalty0 (1), January 2000.

\bibitem[Zinkevich(2003)]{Zinkevich}
Martin Zinkevich.
\newblock Online convex programming and generalized infinitesimal gradient
  ascent.
\newblock In \emph{Proc. 10th International Conference on Machine Learning},
  pages 928--935, 2003.

\end{thebibliography}

\newpage
\appendix

\section{Missing proofs in Section~\ref{sec:Hardness}}

\subsection{Proof of Corollary~\ref{cor:no-no-regret}}\label{app:coro}

{\bf Corollary~\ref{cor:no-no-regret}.}
{\it If a cardinality problem $\pazocal{P}$ is $NP$-complete, then there is no vanishing regret algorithm for online min-max-$\pazocal{P}$ unless $NP=RP$.}

\begin{proof}
We  note that the proof of Theorem~\ref{th2} does not require $A$, $B$ and $\alpha$ to be constant: they can be functions of the instance, and then we have to check the running time and the probability of error. 

For the running time, the algorithm remains polynomial as soon as the $1/\bigl(1-\alpha \frac{A}{B}\bigr)$ is polynomially bounded (so that $T$ remains polynomially bounded in $n$). 

For the probability of error it is still 0 when $|x_{opt}|\geq Bn$, and it is at most $1-\frac{A\epsilon}{2B}$ otherwise.

Then, for a cardinality problem $\pazocal{P}$, if $A=k/n$ and $B=\frac{k+1}{n}=A + \frac{1}{n}$, then deciding whether $|x_{opt}|\leq k$ is the same as deciding whether $|x_{opt}|\leq An$ or $|x_{opt}|\geq Bn$. By setting $\alpha=1$,  $A=k/n$, $B=\frac{k+1}{n}$ and $\epsilon=\frac{1}{k}$ in the proof of Theorem~\ref{th2} we get that the algorithm remains polynomial. The probability of error (when $|x_{opt}|\leq An$) is at most $1-\frac{1}{2(k+1)}\leq 1-\frac{1}{2(n+1)}$. It is well known that this is sufficient to show membership in RP.
\end{proof}

\subsection{Proof of Theorem~\ref{th:gd}}\label{app:thgd}

{\bf Theorem~\ref{th:gd}}
{\it Assume that $W = \max_{1 \leq t \leq T} \max_{1 \leq i \leq n} w^t_i$ is the maximum assigned weight. Then,
after $T$ time steps, Algorithm~\ref{alg:mmvc:MinMaxVC} achieves
$$
\sum_{t=1}^{t} \max_{1 \leq i \leq n} w^t_i X^{t}_{i}
\leq 2 \cdot \min_{X^{*} \in \pazocal{X}} \sum_{t=1}^{t} \max_{1 \leq i \leq n} w^t_i X^{*}_{i} + 
	3 W \sqrt{nT}
$$}

\begin{proof}
By the OGD algorithm (see \cite{Zinkevich} or \cite[Chapter 3]{Hazanothers16:Introduction-to-online}), we have 
$$
\sum_{t=1}^{t} \max_{1 \leq i \leq n} w^t_i x^{t}_{i}
\leq \min_{x^{*} \in \mathcal{Q}} \sum_{t=1}^{t} \max_{1 \leq i \leq n} w^t_i x^{*}_{i} + 
	 \frac{3DG}{2}\sqrt{T}
$$
where $D$ is the diameter of $\mathcal{Q}$ (i.e. $D \leq \sqrt{n}$) and $G$ is the Lipschitz constant of the cost vectors $g^t$ (i.e. $G\leq W$). Moreover, by the rounding procedure it always holds that
$$
\max_{i=1,\dots, n}X^t_iw^t_i \leq 2 \max_{i=1,\dots, n} x^t_iw^t_i 
$$
Combining these inequalities, the theorem follows.
\end{proof}


\section{A polynomial time vanishing regret algorithm for GKP (Theorem~\ref{th7})}\label{app:a}

\subsection{Generalized Follow the Perturbed Leader}\label{app:a1}

\par For the sake of completeness, we introduce the \textit{generalized FTPL} (GFTPL) method of Dudik et al. \cite{DudikHaghtalab17:Oracle-efficient-online}, which can be used to achieve a vanishing regret for non linear objective functions for some discrete problems. The key idea of the GFTPL algorithm is to use common randomness for every feasible action but apply it in a different way. This concept was referred by the authors of \cite{DudikHaghtalab17:Oracle-efficient-online} as \textit{shared randomness}. In their algorithm, the regularization term $R(x)$ of the FTPL algorithm is substituted by the inner product $\Gamma_x\cdot a$ where $a$ is a random vector and $\Gamma_x$ is a vector corresponding to the action $x$. In FTPL it was sufficient to have $\Gamma_x = x$ but in this general setting, $\Gamma_x$ must be the row of a \textit{translation matrix} that corresponds to action $x$.

\begin{definition}[\textit{Admissible Matrix} \cite{DudikHaghtalab17:Oracle-efficient-online}]
A matrix $\Gamma$ is admissible if its rows are distinct. It is $(\kappa,\delta)$-admissible if it is admissible and also (i) the number of distinct elements within each column is at most $\kappa$ and (ii) the distinct elements within each column differ by at least $\delta$.
\end{definition}

\begin{definition}[\textit{Translation Matrix} \cite{DudikHaghtalab17:Oracle-efficient-online}]
A translation matrix $\Gamma$ is a $(\kappa, \delta)$-admissible matrix with $\abs{\pazocal{X}}$-rows and N-columns. Since the number of rows is equal to the number of feasible actions, we denote as $\Gamma_x$ the row corresponding to action $x\in\pazocal{X}$. In the general case, $\Gamma \in [\gamma_m, \gamma_M]^{\pazocal{X}\times N}$ and $G_\gamma = \gamma_M - \gamma_m$ is used to denote the diameter of the translation matrix.
\end{definition}

\par From the definition of the translation matrix it becomes clear that the action space $\pazocal{X}$ needs to be finite. Note that the number of feasible actions can be exponential to the input size, since we do not need to directly compute the translation matrix. The generalized FTPL algorithm for a maximization problem is presented in algorithmic box \ref{alg:GFTPL}. At time $t$, the algorithm decides the perturbed leader as the action that maximizes the total payoff on the observed history plus some noise that is given by the inner product of $\Gamma_x$ and the perturbation vector $\alpha$. Note that in \cite{DudikHaghtalab17:Oracle-efficient-online} the algorithm only needs an oracle with an {\it additive} error $\epsilon$. We will see later that it works also for a {\it multiplicative} error $\epsilon$ (more precisely, for an FPTAS).

\RestyleAlgo{boxruled}
\LinesNumbered
\begin{algorithm}[ht]
  \caption{Generalized FTPL algorithm\label{alg:GFTPL}}
  \KwData{A $(\kappa,\delta)$-admissible translation matrix $\Gamma \in [\gamma_m, \gamma_M]^{\pazocal{X}\times N}$, perturbation parameter $\eta$, optimization parameter $\epsilon$.}
 Draw $a$ randomly from hypercube $[0,\eta]^N$.\\
 \For{ t=1,2,$\dots$, T}{
  Decide $x^t$ such that $\forall x\in\pazocal{X}$:
  $$ \sum_{\tau=1}^{t-1}f(x^t,y^\tau) + a\cdot \Gamma_{x^t} \geq \sum_{\tau=1}^{t-1}f(x,y^\tau) + a\cdot \Gamma_{x} - \epsilon$$ \\
  Observe $y^t$ and gain payoff $f(x^t,y^t)$.\\
 }
\end{algorithm}

Let us denote $G_f$ as the diameter of the objective function, i.e., $G_f = \max_{x,x'\in\pazocal{X},\;y,y'\in\pazocal{Y}}\abs{f(x,y) - f(x',y')}$.

\begin{theorem}[\cite{DudikHaghtalab17:Oracle-efficient-online}]\label{theorem1}
By using an appropriate $\eta$ to draw the random vector, the regret of the generalized FTPL algorithm is:

$$
R_T \leq N\sqrt{\kappa G_fG_\gamma\frac{G_f + 2\epsilon}{\delta} T} + \epsilon T
$$

\end{theorem}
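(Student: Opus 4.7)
The plan is to follow the two–layer template that is standard for Follow-the-Perturbed-Leader analyses, but carefully track how the oracle's approximation error $\epsilon$ propagates through each layer. Denote by $x^\star$ the best fixed action in hindsight and, for the analysis only, the exact perturbed leaders $\tilde x^t = \argmax_{x \in \pazocal{X}} \sum_{\tau=1}^{t-1} f(x,y^\tau) + a\cdot \Gamma_x$ and $\tilde x^t_+ = \argmax_{x \in \pazocal{X}} \sum_{\tau=1}^{t} f(x,y^\tau) + a\cdot \Gamma_x$. Let $F^t(x) = \sum_{\tau=1}^t f(x,y^\tau)$. The goal is to bound $R_T = F^T(x^\star) - \mathbb{E}\bigl[\sum_t f(x^t,y^t)\bigr]$.

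Step 1 (Be-the-leader with perturbation). By a routine induction on $T$ one shows
\[
\sum_{t=1}^{T} f(\tilde x^t_+, y^t) + a\cdot \Gamma_{\tilde x^1_+} \;\geq\; F^T(x^\star) + a\cdot \Gamma_{x^\star},
\]
so the ``be-the-leader'' sequence is nearly optimal, losing only a perturbation overhead bounded by $|a\cdot(\Gamma_{\tilde x^1_+} - \Gamma_{x^\star})| \le \eta N G_\gamma$ in expectation. Rewriting in terms of $\tilde x^t$ (the leader before seeing $y^t$) introduces a one-step stability term $\sum_t (f(\tilde x^t_+,y^t) - f(\tilde x^t,y^t))$ that will be bounded next.

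Step 2 (Stability with an $\epsilon$-approximate oracle). This is the step I expect to be the main obstacle, because the oracle slack $\epsilon$ has to be absorbed into the switching probability. Fix all coordinates of $a$ except $a_i$ and regard $a_i$ as a scalar random variable uniform on $[0,\eta]$. The algorithm's actual decision $x^t$ satisfies $F^{t-1}(x^t) + a\cdot\Gamma_{x^t} \ge F^{t-1}(x) + a\cdot\Gamma_x - \epsilon$ for all $x$, and likewise for $x^{t+1}$ relative to $F^t$. Since $f$ has diameter $G_f$, the incremental shift caused by observing $y^t$ is at most $G_f$. Hence, if $x^t\ne x^{t+1}$, then the two candidate actions' perturbation contributions via coordinate $i$ must lie inside an interval of length at most $G_f + 2\epsilon$; using $(\kappa,\delta)$-admissibility, the values $\Gamma_{x,i}$ take at most $\kappa$ distinct values separated by at least $\delta$, so the set of $a_i$ causing a flip has measure at most $\kappa(G_f+2\epsilon)/\delta$. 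Summing over the $N$ coordinates and dividing by $\eta$ yields
\[
\Pr[x^t \neq x^{t+1}] \;\le\; \frac{N\kappa(G_f+2\epsilon)}{\eta\,\delta}.
\]
Combined with the diameter $G_f$ of $f$, the total expected stability loss over $T$ rounds is at most $G_f T\cdot N\kappa(G_f+2\epsilon)/(\eta\delta)$. The $2\epsilon$ correction here is precisely what forces the appearance of $G_f+2\epsilon$ inside the square root of the stated bound.

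Step 3 (Accounting for the oracle slack and optimizing $\eta$). The fact that $x^t$ is only an $\epsilon$-approximate argmax, rather than $\tilde x^t$, costs at most $\epsilon$ per round in the comparison to the exact leader, adding an extra $\epsilon T$ term directly to the regret. Collecting the three contributions gives
\[
R_T \;\le\; \eta N G_\gamma \;+\; \frac{N\kappa G_f(G_f+2\epsilon)}{\eta\,\delta}\,T \;+\; \epsilon T.
\]
Balancing the first two terms by choosing $\eta = \sqrt{\kappa G_f(G_f+2\epsilon)T/(\delta G_\gamma)}$ yields the stated bound $R_T \le N\sqrt{\kappa G_f G_\gamma (G_f+2\epsilon) T/\delta} + \epsilon T$. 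The delicate point, and the one I would check most carefully, is Step 2: in the original FTPL argument the switching probability only uses $G_f/\delta$, and it requires care to show that the two approximate-argmax slacks of size $\epsilon$ add coherently to give $G_f + 2\epsilon$ rather than a worse constant, and that this clean per-coordinate argument is still valid when $\Gamma$ is only $(\kappa,\delta)$-admissible rather than having a generic position property.
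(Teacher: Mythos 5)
First, a point of order: the paper does not prove this statement at all --- Theorem~\ref{theorem1} is imported verbatim from \cite{DudikHaghtalab17:Oracle-efficient-online} (note the citation in the theorem header), so there is no in-paper proof to compare yours against. Judged against the actual argument in the cited work, your reconstruction has the right architecture: the be-the-perturbed-leader induction, the per-coordinate stability bound driven by $(\kappa,\delta)$-admissibility, and the separate $\epsilon T$ charge for using an approximate argmax are exactly the three ingredients of the Dudik et al.\ analysis, and you correctly identify that the oracle slack must also enter the switching probability, which is where $G_f+2\epsilon$ comes from.

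Two caveats, neither of which is a conceptual gap. (i) Step 2 remains a sketch precisely at the point you flag: to get measure at most $\kappa(G_f+2\epsilon)/\delta$ for the flipping set of $a_i$ one needs the monotonicity argument of \cite{DudikHaghtalab17:Oracle-efficient-online} --- as $a_i$ grows, the column value $\Gamma_{\cdot,i}$ of any $\epsilon$-approximate leader can only move upward except inside transition windows of width $(G_f+2\epsilon)/\delta$ around each of the at most $\kappa$ distinct column values --- and you should reconcile Step 1 (stated for the \emph{exact} perturbed leaders $\tilde x^t$) with Step 2 (argued for the algorithm's \emph{approximate} choices $x^t$); the cited proof runs the whole induction directly on the approximate leaders, which is what produces the clean additive $\epsilon T$ term you invoke in Step 3. (ii) Balancing $\eta N G_\gamma$ against $N\kappa G_f(G_f+2\epsilon)T/(\eta\delta)$ yields a sum of $2N\sqrt{\cdot}$ rather than the displayed $N\sqrt{\cdot}$; this factor-of-two mismatch is an artifact of how loosely the constants are tracked in the quoted statement, not a flaw in your reasoning.
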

\noindent By setting $\epsilon=\Theta(1/\sqrt{T})$, this clearly gives a vanishing regret.\\

Let us quote two difficulties to use this algorithm. First,  if the multi-instance version is $NP$-hard, having an efficient algorithm solving the oracle with an additive error $\epsilon$ is quite improbable. In theorem \ref{th:FPTAS} we remark that the assumption of an additive error $\epsilon$ can be replaced by the assumption of the existence of a FPTAS for the oracle. This is proven formally in \ref{app:a2}. Second, the oracle has to solve a problem where the objective function is the sum of a multi-instance version of the offline problem and the perturbation. We will see in Appendix \ref{app:a3} how we can \textit{implement} the perturbation mechanism $\Gamma_x\cdot \alpha$ as the payoff of action $x$ on a set of (random) observations of the problem.


\subsection{Extending the GFTPL algorithm for FPTAS oracles (Theorem~\ref{th:FPTAS})}\label{app:a2}

\par Recall that the GFTPL algorithm (algorithmic box \ref{alg:GFTPL}) assumed the existence of an oracle that could return an optimal action over the observed perturbed history within any constant additive error $\epsilon$ in polynomial time (both to the input and $\epsilon$). We will show that this is equivalent to assuming the existence of a FPTAS for the multi-instance perturbed offline problem. Namely, let us consider a modification of Algorithm~\ref{alg:GFTPL} where at at each time $t$  we compute  a solution $x^t$ such that  $\forall x\in\pazocal{X}$:
\begin{equation}\label{eqfptas}\sum_{\tau=1}^{t-1}f(x^t,y^\tau) + a\cdot \Gamma_{x^t} \geq (1-\epsilon') \left(\sum_{\tau=1}^{t-1}f(x,y^\tau) + a\cdot \Gamma_{x}\right)\end{equation}

\par Then, if we use $F_M$ to denote the maximum payoff, i.e., $F_M = \max_{x\in\pazocal{X},\; y\in\pazocal{Y}}f(x,y)$, by applying the same analysis as in~\cite{DudikHaghtalab17:Oracle-efficient-online}, we can show that by fixing  $\epsilon ' = \frac{\epsilon}{TF_M + N\eta\Gamma_M}$ we are guaranteed to get an action that has at least the same total perturbed payoff of decision $x^t$ if an additive optimization parameter $\epsilon$ was used. The computation is polynomial if we use an FPTAS. Then, we can still get a vanishing regret by using $\epsilon' = O(T^{-\frac{3}{2}})$ instead of $\epsilon = O(T^{-\frac{1}{2}})$ (considering all parameters of the problem as constants). 

\par Thus, we can achieve a vanishing regret for any online learning problem in our setting by assuming access to an oracle OPT that can compute (for any $\epsilon'$) in polynomial time a decision $x^t$ satisfying Equation~(\ref{eqfptas}).


\subsection{Distinguisher sets and a translation matrix for GKP}\label{app:a3}

\par As noted above, an important issue in the method arises from the perturbation. Until now, the translation matrix $\Gamma$ could be any $(\kappa, \delta)$-admissible matrix as long as it had one distinct row for every feasible action in $\pazocal{X}$. However, this matrix has to be considered by the oracle in order to decide $x^t$. In \cite{DudikHaghtalab17:Oracle-efficient-online} the authors introduce the concept of \textit{implementability} that overcomes this problem. We present a simplified version of this property.

\theoremstyle{definition}
\begin{definition}[\textit{Distinguisher Set}]
A distinguisher set for an offline problem \textit{P} is a set of instances $S=\{y_1,y_2,\dots , y_N\} \in \pazocal{Y}^N$ such that for any feasible actions $x,x'\in\pazocal{X}$:

$$
x \neq x'  \Leftrightarrow \exists j\in [N]:\; f(x,y_j) \neq f(x',y_j) 
$$
\end{definition} 
\noindent This means that $S$ in a set of instances that ``forces'' any two different actions to differentiate in at least one of their payoffs over the instances in $S$. If we can determine such a set, then we can construct a translation matrix $\Gamma$ that significantly simplifies our assumptions on the oracle.

\par Let $S=\{y_1,y_2,\dots , y_N\}$ be a distinguisher set for our problem. Then, for every feasible action $x\in\pazocal{X}$ we can construct the corresponding row of $\Gamma$ such that:

$$
\Gamma_x = [f(x,y_1), f(x,y_2), \dots, f(x,y_N)]
$$

\par Since $S$ is a distinguisher set, the translation matrix $\Gamma$ is guaranteed to be admissible. Furthermore, according to the set we can always determine some $\kappa$ and $\delta$ parameters for the translation matrix. By implementing $\Gamma$ using a distinguisher set, the expression we need to (approximately) maximize at each round can be written as:

$$
\sum_{\tau=1}^{t-1}f(x,y^\tau) + \alpha\Gamma_{x} = \sum_{\tau=1}^{t-1}f(x,y^\tau) + \sum_{i=1}^N a_if(x,y_i)
$$

\noindent This shows that the perturbations transform into a set of weighted instances, were the weights $a_i$ are randomly drawn from uniform distribution $[0,\eta]$. This is already a significant improvement, since now the oracle has to consider only weighted instances of the offline problem and not the arbitrary perturbation $\alpha \Gamma_x$ we were assuming until now. Furthermore, for a variety of problems (including GKP), we can construct a distinguisher set $y_1,\dots,y_N$ such that:

$$
af(x,y_j) = f(x,ay_j) \;\forall a\in\real,\; j\in [N]
$$
\noindent If this is true, then we can shift the random weights of the oracle inside the instances:

$$
\sum_{\tau=1}^{t-1}f(x,y^\tau) + \alpha\Gamma_{x} = \sum_{\tau=1}^{t-1}f(x,y^\tau) + \sum_{i=1}^N f(x,a_iy_i)
$$

\noindent Thus, if we have a distinguisher set for a given problem, to apply GFTPL all we need is an FPTAS for optimizing the total payoff over a set of weighted instances. \\

We now provide a distinguisher set for the generalized knapsack problem.  Consider a set of $n$ instances $(p_j,B_j)$ of the problem such that in instance $(p_j,B_j)$ item $j$ has profit $P$, all other items have profit 0 and the knapsack capacity is $B_j=W_s$. Since the total weight of a set of items can never exceed $W_s$, it is easy to see that $\forall x\in\pazocal{X}$:

\[ 
f(x,p_j,B_j) = \left\{
\begin{array}{ll}
      P & \text{if item $j$ is selected in set $x$} \\
      0 & \text{otherwise} \\
\end{array} 
\right. 
\]

\par For any two different assignments $x$ and $x'$, there is at least one item $j\in [n]$ that they don't have in common. It is easy to see that in the corresponding instance $(y_j,B_j)$ one of the assignments will have total profit $P$ and the other will have total profit $0$. Thus, the proposed set of instances is indeed a distinguisher set for the generalized knapsack problem. We use this set of instances to implement the $\Gamma$ matrix. Then, every column of $\Gamma$ will have exactly 2 distinct values 0 and $P$, making the translation matrix $(2,P)$-admissible. As a result, in order to achieve a vanishing regret for online learning GKP, all we need is an FPTAS for the multi-instance generalized knapsack problem.

\subsection{An FPTAS for the multi-instance version of GKP}\label{gkp:fptas}

To get an FPTAS, we show that we can map a set of instances of the generalized knapsack problem to a single instance of the more general convex-generalized knapsack problem.  Suppose that we have a set of $m$ instances $(p^i,B^i)$ of GKP. Then, the total profit of every item set $x\in\pazocal{X}$ is:

$$
\texttt{profit}(x) = \sum_{t=1}^m (x\cdot p^t - c\max\{0,w\cdot x-B^t\}) = x\cdot p_s - ck(x|B^1,...,B^m)
$$
where $p_s = \sum_{t=1}^m p^t$ and $ k(x|B^1,...,B^m) = \sum_{t=1}^m\max\{0,w\cdot x-B^t\}$. Let $W = w\cdot x$ the total weight of the item set and $\tilde{B^1},\dots ,\tilde{B^m}$ a non-decreasing ordering of the knapsack capacities. Then:
\[ 
k(x|B^1,...,B^m) = k(W|\tilde{B_1},...,\tilde{B^m}) \left\{
\begin{array}{ll}
      0 & , W \leq \tilde{B^1} \\
      W - \tilde{B^1} & , \tilde{B^1} < W \leq  \tilde{B^2}\\
      2W - (\tilde{B^1} + \tilde{B^2}) & , \tilde{B^2} < W \leq  \tilde{B^3}\\
      \vdots \\
      mW - (\tilde{B^1} + \tilde{B^2} + \dots + \tilde{B^m}) & , \tilde{B^m} < W\\
\end{array} 
\right. 
\]

Note that the above function is always convex. This means that at every time step $t$, we need a FPTAS for the maximization problem $x\cdot p - f(W)$ where $f$ is a convex function. We know that such an FPTAS exists (\cite{Antoniadis}). In this paper, the authors suggest a FPTAS with time complexity $O(n^3/\epsilon^2)$ by assuming that the convex function can be computed at constant time. In our case the convex function $k$ is part of the input; with binary search we can compute it in logarithmic time.

\section{Proof of Theorem~\ref{th:hardoracle}}\label{app:proof}

{\bf Theorem \ref{th:hardoracle}.} {\it
The multi-instance versions of min-max perfect matching, min-max path,  min-max vertex cover and $P3||Cmax$ are strongly $NP$-hard.}

To complete the proof, we consider here the min-max vertex cover and $P3||Cmax$ problems.


\noindent {\bf min-max vertex cover}

We make a straightforward reduction from the \textit{vertex cover} problem. Consider any instance $G(V,E)$ of the vertex cover problem, with $V=\{v_1,\dots,v_n\}$. We construct $n$ weight functions $w^1,\dots, w^n: V\rightarrow \real^+$ such that in $w^i$ vertex $v_i$ has weight 1 and all other vertices have weight 0. If we consider the instance of the multi-instance min-max vertex cover with graph $G(V,E)$ and weight functions $w^1,\dots,w^n$, it is clear that any vertex cover has total cost that is equal to its size, since for any vertex $v_i\in V$ there is exactly one weight function where $w^i=1$ and $w^i=0$ for every other weight function.
\par Since vertex cover is strongly $NP$-hard, $NP$-hard to approximate within ratio $\sqrt{2}-\epsilon$ and UGC-hard to approximate  within ratio $2-\epsilon$, the same negative results hold for the multi-instance min-max vertex cover problem.\\


\noindent {\bf P3||Cmax}

\par We prove that the multi-instance $P3||Cmax$ problem is strongly $NP$-hard even when the processing times are in $\{0,1\}$, using a reduction from the $NP$-complete \textit{3-coloring} problem. In the 3-coloring (3C) problem, we are given a graph $G(V,E)$ and we need to decide whether there exists a coloring of its vertices with 3 colors such that if two vertices are connected by an edge, they cannot have the same color. 
\par For every instance $G(V,E)$ of the 3C problem with $|V|=n$ and $|E|=m$, we construct (in polynomial time) an instance of the multi-instance $P3||Cmax$ with $n$-jobs and $N=m$ processing time vectors. Every edge $(i,j)\in E$ corresponds to a processing time vector with jobs $i$ and $j$ having processing time 1 and every other job having processing time 0. It is easy to see that at each time step the makespan is either 1 or 2 and thus the total makespan is at least $m$ and at most $2m$.
\par If there exists a 3-coloring on $G$ then by assigning every color to a machine, at each time step there will not be two jobs with non-zero processing time in the same machine and thus the makespan will be 1 and the total solution will have cost $m$. If the total solution has cost $m$ then this means that at every time step the makespan was 1 and by assigning to the jobs of every machine the same color we get a 3 coloring of $G$. Hence,  the multi-instance variation of the $P3||Cmax$ problem is strongly $NP$-hard.

\end{document}